\newtheorem{proposition}{Proposition}
\newtheorem{lemma}{Lemma}
\newtheorem{remark}{Remark}
\def\bSig\mathbf{\Sigma}
\begin{document}

\label{firstpage}


\title{Statistical Inference for Qualitative Interactions with Applications to Precision Medicine and Differential Network Analysis}
\author{Aaron Hudson and
Ali Shojaie \\
Department of Biostatistics, University of Washington, Seattle, Washington}
\date{}
\maketitle

\begin{abstract}
Qualitative interactions occur when a treatment effect or measure of association varies in sign by sub-population.
Of particular interest in many biomedical settings are absence/presence qualitative interactions, which occur when an effect is present in one sub-population but absent in another.
Absence/presence interactions arise in emerging applications in precision medicine, where the objective is to identify a set of predictive biomarkers that have prognostic value for clinical outcomes in some sub-population but not others.
They also arise naturally in gene regulatory network inference, where the goal is to identify differences in networks corresponding to diseased and healthy individuals, or to different subtypes of disease; such differences lead to identification of network-based biomarkers for diseases.
In this paper, we argue that while the absence/presence hypothesis is important, developing a statistical test for this hypothesis is an intractable problem.
To overcome this challenge, we approximate the problem in a novel inference framework.
In particular, we propose to make inferences about absence/presence interactions by quantifying the relative difference in effect size, reasoning that when the relative difference is large, an absence/presence interaction occurs. 
The proposed methodology is illustrated through a simulation study as well as an analysis of breast cancer data from the Cancer Genome Atlas.
\end{abstract}

%




%

\newpage

\section{Introduction}
\label{s:intro}

An objective of many biomedical studies is to identify and test for \textit{interactions}, which arise when a measure of effect or association between variables differs by sub-population.
Precision medicine and genetic network inference provide examples of areas in which interactions are of interest.
For instance, researchers in precision medicine seek to understand how patients'  characteristics are associated with heterogeneity in response to treatment.
In genetics studies, it is of interest to determine how genetic networks, which summarize the associations between genes, depend on phenotype.

Interactions may lack clinical or scientific significance when differences in effect are small.
In addition to detecting interactions, it is important to identify which are meaningful.
For example, in precision medicine, the most important differences in treatment effect may be those in which some sub-populations of patients benefit from the treatment, while other sub-populations are harmed or unaffected.
Additionally, one may want to identify differences among sub-populations in the set of biomarkers that have prognostic value for a health outcome --- that is, to determine whether some biomarkers are predictive of the outcome in only a subset of the full population.
Genetic network inference provides another example: When comparing sub-population level genetic networks, it may be of primary interest to identify pairs of genes that share an association in some sub-populations but share no association in others, or to identify pairs that have a positive association in one sub-population and a negative association another.
This is known as differential network biology \citep{ideker2012differential}.

Such \textit{qualitative interactions} are the focus of this paper.
Qualitative interactions occur when a measure of effect differs in sign by sub-population.
We consider two types of qualitative interactions: positive/negative interactions --- also known in the literature as cross-over interactions \citep{gail1985testing} --- and absence/presence interactions --- sometimes referred to as pure interactions \citep{vanderweele2019interaction}.
Positive/negative interactions occur when an effect is positive in one sub-population and negative in another, and absence/presence interactions occur when the effect is present in one population but absent in another.

Our objective is to formally test for qualitative interactions, given independent samples from each sub-population.
Testing for positive/negative interactions is well-studied \citep{gail1985testing, piantadosi1993comparison, pan1997test, silvapulle2001tests, li2006detecting}, while testing for absence/presence interactions has received substantially less attention.
Na\"ive approaches, to be discussed in the sequel, require an untenable minimum signal strength condition --- that if an effect is present in any sub-population, it is large enough to be detected with absolute certainty.
No approaches exist, to the best of our knowledge, that obviate this assumption.

In this paper, we propose a novel framework for inference about absence/presence interactions.
Our proposed methodology allows for well-calibrated hypothesis testing under mild assumptions.
We also introduce a numerical summary that measures the strength of absence/presence interactions, while accounting for the uncertainty associated with parameter estimation. 
Additionally, we describe methods for simultaneous inference about absence/presence and positive/negative interactions.
The methodology we introduce provides an effective and flexible inference tool in precision medicine and genetic network analysis, as we illustrate in simulations and an analysis of breast cancer data from The Cancer Genome Atlas (TCGA).

\section{Background}

\subsection{Notation}

As we begin to formalize the problem, we first introduce some notation.
We consider two sub-populations, labeled by $g \in \{1,2\}$.
Let $\theta_g \in \mathds{R}$ denote a measure of association in sub-population $g$.
When convenient, we write $\theta = (\theta_1, \theta_2)$.
We can consider various measures of association, such as: correlation coefficients, indicating the strength of  linear relationship between two variables of interest; log odds ratios, describing the relationship between predictors and a binary outcome; and log hazard ratios, describing the association between predictors and a time-to-event outcome. 

We assume that given i.i.d. samples of size $n_1$ and $n_2$ from each sub-population, $\sqrt{n_g}$-consistent and asymptotically normal estimates $\hat{\theta}_g$ of $\theta_g$ are available, i.e.,
\[
\sqrt{n_g}\left(\hat{\theta}_g - \theta_g\right) \to_d N\left(0, {\sigma_g^2}\right),
\]
with ${\sigma^2_g} > 0$ denoting the asymptotic variance. 
For expositional simplicity, we assume balanced sample sizes $n_1 = n_2 = n$ (key results are stated more generally in the Appendix).
We also assume ${\sigma^2_g}$ is known, though we can instead use a consistent estimate, as is commonly done in practice.

We now formally state the null hypotheses of no positive/negative interactions and no absence/presence interaction, labeled $H_0^{\text{P/N}}$ and $H_0^{\text{A/P}}$, respectively:
\begin{align}
&H_0^{\text{P/N}}: \theta_g \geq 0 \text{ for } g \in \{1,2\}  
\textbf{ or }
\theta_{g} \leq 0 \text{ for } g \in \{1,2\} 
\\
&H_0^{\text{A/P}}: \theta_g = 0 \text{ for } g \in \{1, 2\}
\textbf{ or }
\theta_{g} \neq 0 \text{ for } g \in \{1, 2\}.
\end{align}
We let $\Theta_0^{\text{P/N}}, \Theta_0^{\text{A/P}}$ and $\Theta_1^{\text{P/N}}, \Theta_1^{\text{A/P}}$ denote the corresponding null and alternative regions of the parameter space, depicted in Figure \ref{fig:null}. (Recall that the null region is the set of parameters such that the null hypothesis holds, and the alternative region is the complement of the null region.)
The positive/negative null region is the union of the the non-negative and non-positive orthants, and the absence/presence null region is the union of all open orthants and the origin.

\begin{figure}[!h]
\centering
\includegraphics{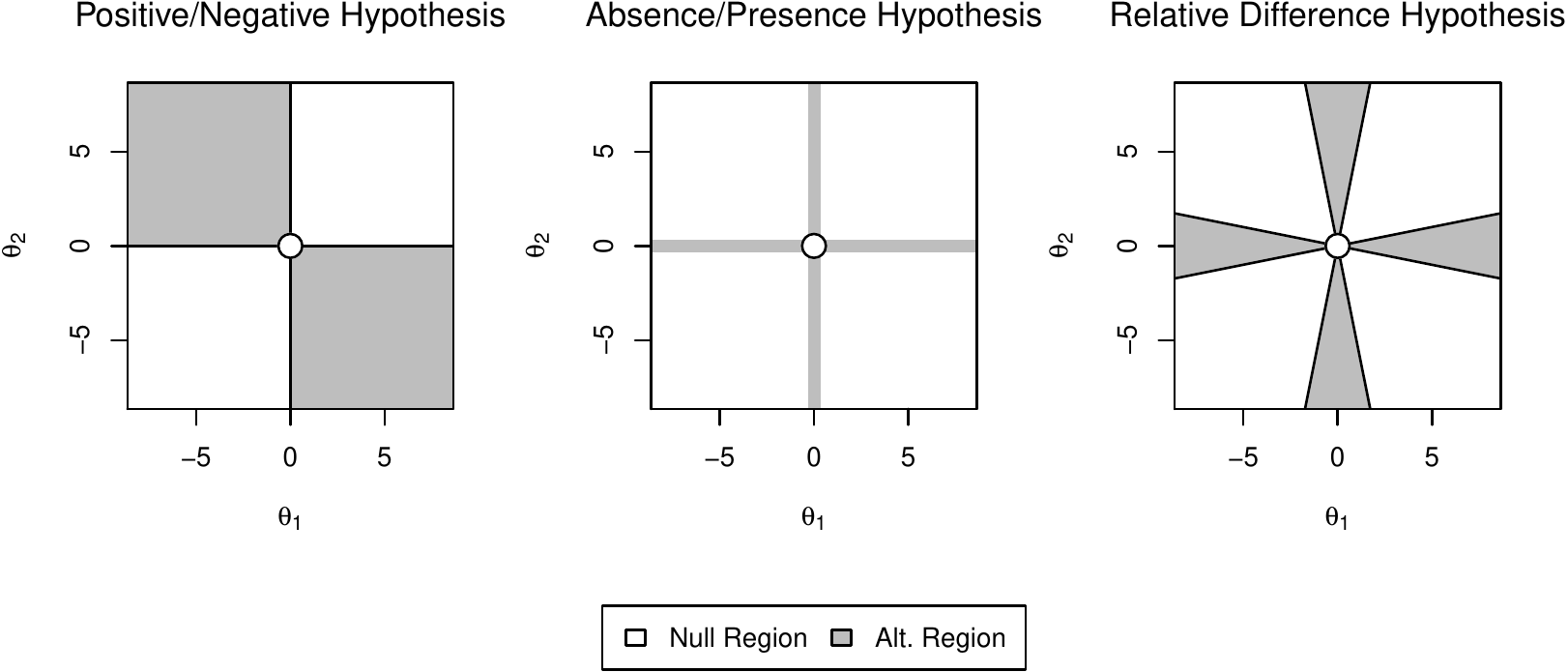}
\caption{Null and alternative regions of parameter space for positive/negative, absence/presence, and relative difference hypotheses.}
\label{fig:null}
\end{figure}

\subsection{Testing Composite Null Hypotheses}

Our goal is to use the estimate $\hat{\theta}$ to perform tests of $H_0^{\text{P/N}}$ and $H_0^{\text{A/P}}$ such that the size is controlled asymptotically under mild assumptions.
Recall that for a null hypothesis $H_0$ with accompanying null region $\Theta_0$, the size of a test is defined as
\[
\sup_{\theta_0 \in \Theta_0} \mathbb{P
}(\text{``Reject the null hypothesis''}| \theta = \theta_0).
\]
In words, the size is the largest possible type-I error rate that could be achieved under any probability distribution given that $\theta$ belongs to the null region.

Here, we describe our general approach for controlling the size at a pre-specified level $\alpha \in (0,1)$.
We first define a test statistic $T$, a map from observable data to a real-valued number, with larger values of $T$ corresponding to more evidence against the null hypothesis.
We then calculate the test statistic on the observed data, which we denote by $t$. 
We write $\mathbb{P}(T > t|\theta = \theta_0)$ as the probability of observing a random test statistic at least as large as the observed value $t$, assuming $\theta = \theta_0$.
We reject the null hypothesis if
\[
\rho(t) \equiv \sup_{\theta_0 \in \Theta_0} \mathbb{P}(T > t|\theta = \theta_0) < \alpha.
\]
One can think of $\mathbb{P}(T > t|\theta = \theta_0)$ as the p-value under a specific null distribution $\theta = \theta_0$; $\rho(t)$ is then the largest of all such p-values.
We reject $H_0$ when there is sufficient evidence to reject all hypotheses $\theta = \theta_0$.
We can view $\rho(t)$ as a generalization of the usual p-value for simple null hypotheses to tests with composite null hypotheses, and will simply refer to $\rho(t)$ as ``p-value''.
Tests of the above form are guaranteed to control the size \citep{casella2002statistical}.

\subsection{Existing Methodology}

We now review existing approaches to test for qualitative interactions. We first discuss testing positive/negative interactions before moving to absence/presence interactions.

\cite{gail1985testing} developed the most widely used procedure to test for positive/negative interactions.
Though Gail and Simon proposed a general $K$-sample test, we focus on the two-sample problem in this paper.
We note that various $K$-sample tests for positive/negative interactions have been proposed \citep{piantadosi1993comparison, silvapulle2001tests, li2006detecting}, but these procedures are essentially equivalent to the Gail-Simon test in the two-sample setting.

Gail and Simon's approach is to perform a likelihood ratio test based on the asymptotic sampling distribution of $\hat{\theta}$.
The likelihood ratio test rejects $H_0^{\text{P/N}}$ for large values of
\begin{align*}
-\,
\frac{\sup_{(a_1, a_2) \in \Theta_0^{\text{P/N}}}\prod_{g\in \{1,2\}} \sigma_g^{-1}\phi\left\{\sqrt{n}\sigma_g^{-1}(\hat{\theta}_g - a_g)\right\}}{\sup_{(b_1, b_2) \in \mathds{R}^2}\prod_{g\in \{1,2\}} \sigma_g^{-1}\phi\left\{\sqrt{n}\sigma_g^{-1}(\hat{\theta}_g - b_g)\right\}},
\end{align*}
where $\phi(\cdot)$ is the standard normal density.
By performing algebreic manipulations, one can show that the likelihood ratio test equivalently rejects the null for large values of
\begin{align*}
T^{\text{P/N}} = 
\min_{(a_1, a_2) \in \Theta_0^{\text{P/N}}}  \sum_{g \in \{1,2\}} 
n \left\{\sigma_g^{-1}\left(\hat{\theta}_g - a_g\right)\right\}^2 .
\end{align*}
The test statistic $T^{\text{P/N}}$ can be interpreted as the shortest distance between $\hat{\theta}$ and the null region, where the distance is inversely weighted by the asymptotic variances of the estimates, as illustrated in Figure \ref{fig:proj}.

\begin{figure}[!h]
\centering
\includegraphics[scale=.85]{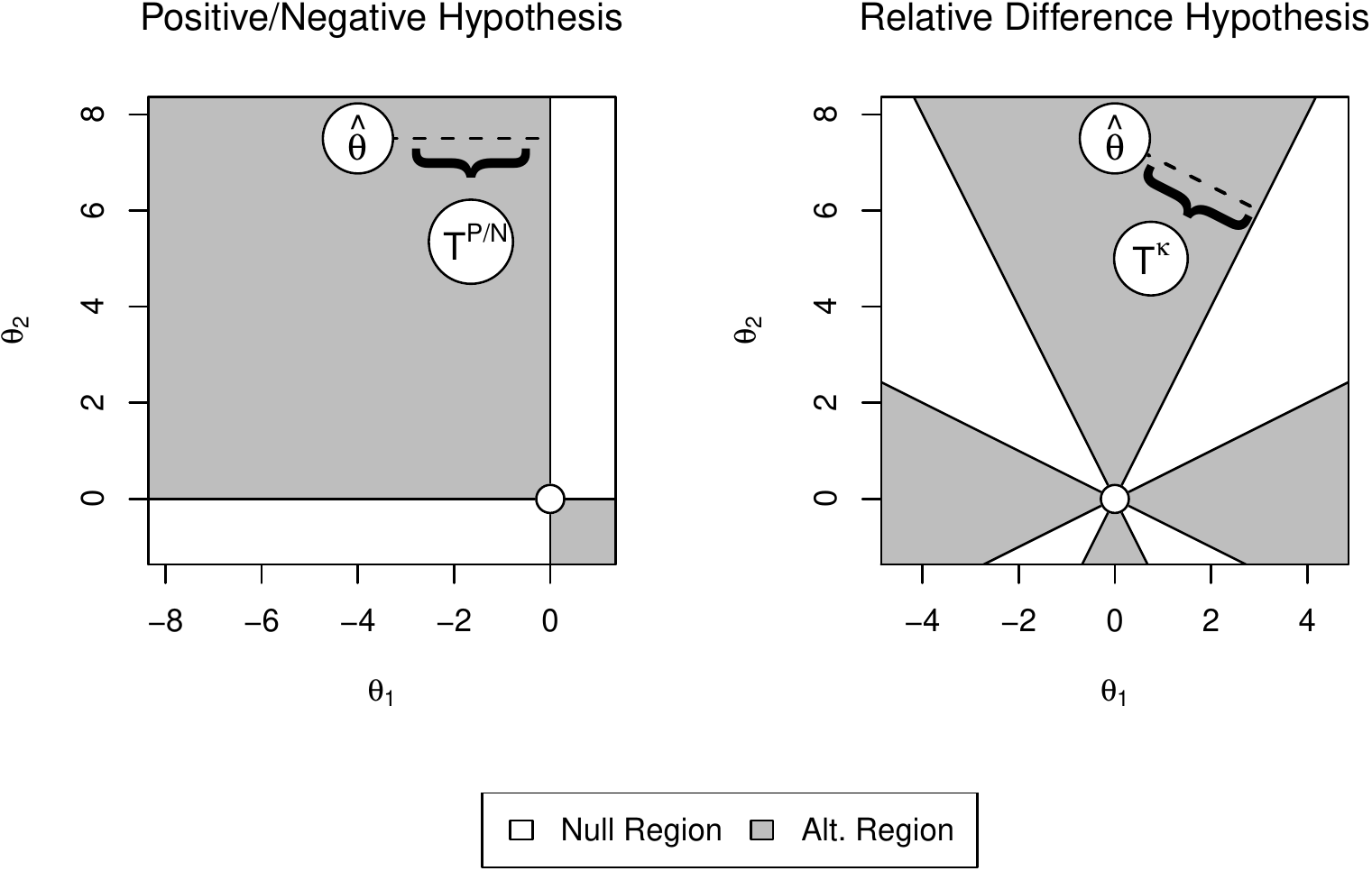}
\caption{Geometric interpretation of likelihood ratio statistic for positive/negative and absence/presence hypotheses.}
\label{fig:proj}
\end{figure}

Gail and Simon show that the test statistic $T^{\text{P/N}}$ can be calculated as
\[
T^{\text{P/N}} = 
\min_{g \in \{1,2\}}\left\{
n
\left(\hat{\theta}_g/{\sigma_g}\right)^2
\right\}
\mathds{1}
\left(
\hat{\theta} \in \Theta_1^{\text{P/N}}
\right),
\]
where $\mathds{1}(\cdot)$ denotes the indicator function.
Furthermore, one can verify that for all $t > 0$, the p-value is easily calculated as
\[
{\rho}^{\text{P/N}}(t) = \sup_{\theta_0 \in \Theta_0^{\text{P/N}}} \lim_{n\to\infty} \mathbb{P}\left(T^{\text{P/N}} > t|\theta = \theta_0\right) = \frac{1}{2}\mathbb{P}\left(\chi^2_1 > t\right).
\]
The likelihood ratio test is quite intuitive and rejects the null when a positive estimate of association is observed in one population, a negative estimate is observed in another population, and both associations are statistically significant.


Now, we discuss approaches to test for absence/presence interactions.
While one might be tempted to perform a likelihood ratio test for absence/presence interactions, the likelihood ratio test fails in the sense that it \textit{never} can reject the null.
To see this, we first recognize that, similar to the positive/negative interaction test, the absence/presence likelihood ratio test would reject for large values of
\begin{align*}
T^{\text{A/P}} = 
\min_{(a_1, a_2) \in \Theta_0^{\text{A/P}}} \sum_{g \in \{1,2\}} 
n
\left\{\sigma_g^{-1}\left(\hat{\theta}_g - a_g\right)\right\}^2.
\end{align*}
Again, the test statistic $T^{\text{A/P}}$ is the shortest distance between $\hat{\theta}$ and the null region $\Theta_0^{\text{A/P}}$.
Because the alternative region $\Theta_1^{\text{A/P}}$ has zero area, $\hat{\theta}$ lies in the null region with probability one.
Therefore, the test statistic is \textit{always} 0, and the likelihood ratio test has no power.

One might alternatively attempt to test the absence/presence null by separately testing the null hypotheses $H_0^{g}: \theta_g = 0$ for $g \in \{1,2\}$ and rejecting the absence/presence null when $H_0^{g}$ is rejected for one $g$ and not rejected for the other.
To control the size of a test of this form, we need to simultaneously control the type-I error under two scenarios: (1) there is an association in both sub-populations, and (2) there is no association in either population.
When there is an association in both populations, a type-I error occurs when we incorrectly \textit{fail to reject} one of $H_0^{g}$.
When there is no association in either population, we make a type-I error when we incorrectly reject one of $H_0^{g}$.
Thus, controlling the size of the test for $H_0^{\text{A/P}}$ using this approach requires simultaneous control of the type-I error rate and type-II error rate of tests for $H_0^{g}$.
If the tests for $H_0^{g}$ are consistent --- that is, the type-II error rates tend to zero with sufficiently large samples --- this approach is asymptotically valid, as only the type-I error rates for tests of $H_0^g$ need to be controlled.
However, with even moderately large samples, we will not be able to correctly reject false $H_0^g$ with absolute certainty unless the true association is strong and hence easy to detect.
This would make the test of $H_0^{\text{A/P}}$ unreliable in the presence of weak signal.

Specifically, we require $\theta_1, \theta_2 > o(n^{-1/2})$.
To see this, we construct a more formal argument.  
For simplicity, suppose $\sigma_1 = \sigma_2 = 1$. 
We consider tests of $H_0^g$ of the form 
\[
\psi_g = 
\begin{cases}
\text{Reject;} & \text{if } \sqrt{n} |\hat{\theta}_g| > a
\\
\text{Accept;} & \text{if } \sqrt{n} |\hat{\theta}_g| < a 
\end{cases},
\]
where $a$ is a constant that would be selected to control the size.
The probability of rejecting the absence/presence null is
\[
\mathbb{P}\left(\text{Reject } H_0^{\text{A/P}}\right)  = 
\mathbb{P}\left(\psi_1 = \text{Reject}\right)\mathbb{P}\left(\psi_2 = \text{Accept}\right) +
\mathbb{P}\left(\psi_1 = \text{Accept}\right)\mathbb{P}\left(\psi_2 = \text{Reject}\right).
\]
Suppose $\theta_1 $ and $\theta_2$ are guaranteed to be greater than $o(n^{-1/2})$ if they are both nonzero.
Then, for $\theta_1, \theta_2 \neq 0$, $\sqrt{n}|\hat{\theta}_g| \to \infty$, and  $\mathbb{P}\left(\text{Reject } H_0^{\text{A/P}} \right) \to 0$.
Therefore, to control the size, we are only required to select $\alpha$ so that the type-I error is controlled when $\theta_1 = \theta_2 = 0$; this can be done by taking $a$ as the $(1-\alpha/4)$ quantile of the standard normal distribution.
However, if we allow $\theta_g < o(n^{1/2})$, we can see a drastically inflated type-I error rate.
For instance, for a small $\epsilon > 0$, let $\theta_1 = n^{-1/2 + \epsilon}$, $\theta_2 = n^{-1/2 - \epsilon}$.
Then $\sqrt{n}|\hat{\theta}_1| \to \infty > a$ while $\sqrt{n}|\hat{\theta}_2| \to 0 < a$, so $\mathbb{P}\left(\text{Reject } H_0^{\text{A/P}}\right) \to 1$.
Thus, when small signal is permitted, tests of this form will be asymptotically anti-conservative.

Both approaches discussed above for testing absence/presence interactions fail for a similar reason: it is difficult to gather evidence supporting that a measure of association is exactly equal to zero.
This is captured by the alternative region having zero area, causing the failure of the first approach.
In the second approach, to obtain evidence supporting that an association is zero, we require that $H_{0,g}$ is only accepted when $\theta_g = 0$; for this, we rely upon a minimum signal strength condition to guarantee that any non-zero association is detected.

\section{Proposed Methodology}

\subsection{Refinement of Absence/Presence Hypothesis}


To mitigate the challenges described in Section 2, we consider a refinement of the absence/presence null hypothesis.
The key idea is that in practice, absence/presence interactions can be approximated by considering the settings where an association is at least moderately large in one population and negligible or \textit{near} zero in the other; or when one association is substantially stronger than the other.
This means that we can expand the alternative region to include neighborhoods of zero in a way that the absence/presence interpretation is preserved.

Recall that when there exists an absence/presence interaction, the ratio of the maximum of the absolute value of the $\theta_g$ to the minimum is infinite.
We cannot test that the ratio is infinite because we will never have evidence to support that the denominator is exactly zero.
However, we can test that the ratio is \textit{large} because we may have evidence to support that the denominator is very small.
Motivated by this intuition, we propose to test whether the relative difference between sub-population measures of association is greater than a large pre-specified constant $\kappa > 1$.
Formally, let $\theta_{\max} = \max_{g}|\theta_g|$ and $\theta_{\min} = \min_{g}|\theta_g|$.
We define the new relative difference null hypothesis $H_0^{\kappa}$ as
\[
H_0^{\kappa}: \theta_{\max}/\theta_{\min} \leq \kappa  \text{ or } \theta_{\max} = \theta_{\min} = 0.
\]
Equivalently,
\begin{align}
H_0^\kappa: \theta_{\max} - \kappa \theta_{\min} \leq 0. \label{eq:rdnull}
\end{align}
The null region $\Theta_0^{\kappa}$, illustrated in Figure \ref{fig:null}, is the union of four linear subspaces --- each residing in a separate orthant of $\mathds{R}^2$; the boundary of each subspace is the union of the spans of vectors with absolute direction $(\kappa, 1)'$ and $(1, \kappa)'$.

The relative difference null region can be viewed as a relaxation of the absence/presence null.
For a large choice of $\kappa$, both our original and refined null hypotheses have the same interpretation: the greater measure of association is substantially larger than the lesser.
However, we find it appealing that $H_0^{\kappa}$ has a reasonable interpretation for any choice of $\kappa$; that is, the multiplicative difference in strength of association is no larger than $\kappa$.

To motivate defining the refined null hypothesis in terms of relative differences rather than absolute differences, we argue that testing for relative differences has at least the following benefits.
First, relative differences are unitless, so the relative difference null is compatible with unitless measures of association such as the Pearson correlation coefficient, which are often preferred in the analysis of biological data.
Second, a reasonable $\kappa$ can be selected without prior knowledge of ranges of strength of association.

Of course, the relative difference null hypothesis depends on the choice of $\kappa$.
For small values of $\kappa$, the relative difference null may be too dissimilar from the absence/presence null to retain its interpretation; for large $\kappa$, the alternative region becomes very small, and the hypothesis may be overly conservative.
In the following subsections, we first construct a test of the relative difference null hypothesis for a pre-specified $\kappa$  and then describe an approach to identify the set of $\kappa$ such that the test rejects the null hypothesis, as to circumvent tuning parameter selection.

\subsection{Likelihood Ratio Test for Relative Difference Hypothesis}

We now develop a testing procedure for the new relative difference null hypothesis.
The relative difference null region, unlike the absence/presence null region, has non-zero area, so a likelihood ratio test will not fail in the same manner as the likelihood ratio test for absence/presence interactions.
Similar to the previously discussed examples, the likelihood ratio test statistic is
\begin{align*}
T^{\kappa} &=
\min_{(a_1, a_2) \in \Theta_0^{\kappa}}
\sum_{g \in \{1,2\}} 
n\left\{\sigma_g^{-1}\left(\hat{\theta}_g - a_g\right)\right\}^2,
\end{align*}
and can be interpreted as the shortest (weighted) distance between $\hat{\theta}$ and the null region $\Theta_0^\kappa$.
Clearly, the test statistic is zero whenever $\hat{\theta}$ lies in the null region.
Otherwise, $T^{\kappa}$ is the shortest distance between $\hat{\theta}$ and the closest of the four linear subspaces that define $\Theta_0^\kappa$. 
The test statistic can be calculated as the distance between $(\hat{\theta}_{\max}, \hat{\theta}_{\min})'$ and its projection onto the span of the vector $(\kappa , 1)'$.
The test statistic's geometric interpretation is illustrated in Figure \ref{fig:proj}.


The likelihood ratio test statistic is straightforward to calculate. 
Let $\hat{\theta}_{\max} = \max_g|\hat{\theta}_g|$ and $\hat{\theta}_{\min} = \min_g|\hat{\theta}_g|$ be the strongest and weakest estimated absolute association.
In the following lemma, we state that $T^{\kappa}$ is equal to the difference between $\hat{\theta}_{\max}$ and $\kappa\hat{\theta}_{\min}$ divided by a normalizing constant.
Thus, the test statistic can be viewed as a plug-in of $\hat{\theta}$ into \eqref{eq:rdnull} with an additional normalizing constant and closely resembles the test statistic proposed by \cite{fieller1940biological} to conduct inference about ratios of means.

\begin{lemma}
The likelihood ratio test statistic $T^{\kappa}$ can be written as
\[
T^{\kappa} = 
\frac{\hat{\theta}_{\max} - \kappa\hat{\theta}_{\text{min}}}{\hat{\tau}_{\max} + \kappa^2\hat{\tau}_{\min}},
\]
where $\hat{\tau}_{\max} = n^{-1}\sigma^2_1\mathds{1}\left(|\hat{\theta}_1| = \hat{\theta}_{\max}\right) + n^{-1}\sigma^2_2\mathds{1}\left(|\hat{\theta}_2| = \hat{\theta}_{\max}\right)$, and $\hat{\tau}_{\min} = n^{-1}\sigma^2_1\mathds{1}\left(|\hat{\theta}_1| = \hat{\theta}_{\min}\right) + n^{-1}\sigma^2_2\mathds{1}\left(|\hat{\theta}_2| = \hat{\theta}_{\min}\right)$.
\end{lemma}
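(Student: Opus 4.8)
The plan is to exploit the symmetry of $\Theta_0^\kappa$ to collapse the four-piece minimization into a single one-dimensional projection, and then evaluate that projection in closed form. First I would record the two invariances of the problem: both the objective $\sum_g n\sigma_g^{-2}(\hat\theta_g - a_g)^2$ and the region $\Theta_0^\kappa$ are unchanged if we flip the sign of any coordinate of $(\hat\theta_1,\hat\theta_2)$ and of $(a_1,a_2)$ simultaneously, and if we swap the two coordinates together with their variances. Consequently the minimizer $a^\star$ lies in the same orthant as $\hat\theta$ and inherits its coordinate ordering, so it suffices to work in the folded coordinates $(\hat\theta_{\max},\hat\theta_{\min})$ with weights $w_{\max}=n\sigma^{-2}_{g^\star}$ and $w_{\min}=n\sigma^{-2}_{g^{\star\star}}$, where $g^\star,g^{\star\star}$ are the groups attaining the max and min; note $w_{\max}=\hat\tau_{\max}^{-1}$ and $w_{\min}=\hat\tau_{\min}^{-1}$ by the definition of the $\hat\tau$'s.

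In these coordinates the relevant piece of $\Theta_0^\kappa$ is the wedge $\{\theta_{\max}\ge\theta_{\min}\ge0,\ \theta_{\max}\le\kappa\theta_{\min}\}$, whose boundary nearest $\hat\theta$ is the ray $R=\{t(\kappa,1)':t\ge0\}$. If $\hat\theta\in\Theta_0^\kappa$ (equivalently $\hat\theta_{\max}\le\kappa\hat\theta_{\min}$) the minimizing distance is zero and there is nothing to prove, so I would assume $\hat\theta_{\max}>\kappa\hat\theta_{\min}$. The reduced problem is then to minimize $f(t)=w_{\max}(\hat\theta_{\max}-t\kappa)^2+w_{\min}(\hat\theta_{\min}-t)^2$ over $t$, i.e. to project $(\hat\theta_{\max},\hat\theta_{\min})$ onto $R$ in the weighted metric. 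Setting $f'(t)=0$ gives $t^\star=(\kappa w_{\max}\hat\theta_{\max}+w_{\min}\hat\theta_{\min})/(\kappa^2 w_{\max}+w_{\min})$, which is nonnegative; this confirms that the unconstrained projection already lands on the feasible ray, so no clamping to the apex or to the opposite edge is needed.

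Substituting $t^\star$ back into $f$ is a routine simplification: the two residuals factor as constant multiples of $(\hat\theta_{\max}-\kappa\hat\theta_{\min})$, and collecting terms yields $f(t^\star)=w_{\max}w_{\min}(\hat\theta_{\max}-\kappa\hat\theta_{\min})^2/(\kappa^2 w_{\max}+w_{\min})$. Rewriting $w_{\max}=\hat\tau_{\max}^{-1}$ and $w_{\min}=\hat\tau_{\min}^{-1}$ turns this into $(\hat\theta_{\max}-\kappa\hat\theta_{\min})^2/(\hat\tau_{\max}+\kappa^2\hat\tau_{\min})$, which is the claimed expression (the numerator emerges squared, consistent with $T^\kappa$ being a squared distance that vanishes on the null; the displayed statement appears to drop this square). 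I would also note that the denominator is exactly the variance of $\hat\theta_{\max}-\kappa\hat\theta_{\min}$ under independence, which explains the resemblance to Fieller's form.

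The step I expect to be the main obstacle is the symmetry reduction rather than the algebra. One must argue carefully that, among the four wedges making up $\Theta_0^\kappa$, the nearest point is attained on the single boundary ray determined by the signs and ordering of $\hat\theta$, and that projection onto that ray coincides with the constrained minimizer --- i.e. that $t^\star\ge0$ and that the competing edge and the wedge interior are never closer. The boundary cases also deserve a remark: when $|\hat\theta_1|=|\hat\theta_2|$ the indicators defining $\hat\tau_{\max}$ and $\hat\tau_{\min}$ are ambiguous, but such points satisfy $\hat\theta_{\max}=\hat\theta_{\min}\le\kappa\hat\theta_{\min}$ and hence lie in the null, where $T^\kappa=0$, so the value of the statistic is unaffected.
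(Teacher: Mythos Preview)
Your proof is correct and follows essentially the same route as the paper: reduce by symmetry to the ordered coordinates $(\hat\theta_{\max},\hat\theta_{\min})$, project onto the ray spanned by $(\kappa,1)'$ in the variance-weighted metric, and simplify the resulting squared distance to $(\hat\theta_{\max}-\kappa\hat\theta_{\min})^2/(\hat\tau_{\max}+\kappa^2\hat\tau_{\min})$; your explicit checks that $t^\star\ge 0$ and that the tie case $|\hat\theta_1|=|\hat\theta_2|$ falls in the null are justifications the paper leaves implicit. Your observation about the missing square is also on point --- the paper's appendix restates the lemma with a square root in the denominator and then defines $T^\kappa$ as the signed root of the likelihood ratio statistic, so the main-text display is a typo.
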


We now discuss how to obtain a p-value for the relative difference hypothesis.
First, we obtain an observed test statistic $t$, a realization of $T^\kappa$ calculated from the data.
Following the approach described in Section 2.2, we define the p-value as $\rho^\kappa(t)$, the largest of all asymptotic tail probabilities $\lim_{n \to \infty}\mathbb{P}(T^\kappa > t| \theta = \theta_0)$ such that $\theta_0$ belongs to the null region.
To determine the maximum tail probability, we characterize the limiting distribution of $T^\kappa$ assuming $\theta = \theta_0$ for all $\theta_0$ in the null region.

Though the null region contains an infinite number of values, $T^\kappa$ can only attain one of three limiting distributions corresponding to the following three cases:
\begin{enumerate}
\item The true association is in the interior of the null region, i.e., $\theta_{\max} - \kappa \theta_{\min} < 0$.
\item The true association is on the boundary of the null region, but both associations are non-zero, i.e., $\theta_{\max} = \kappa \theta_{\min} > 0 $.
\item The true association is zero in both sub-populations, i.e., $\theta_1 = \theta_2 = 0$.
\end{enumerate}

In Proposition 1, we describe the asymptotic behavior of $T^\kappa$ for cases 1 and 2 above.
We provide here some intuition for the result and reserve a formal argument for the Appendix
In case 1, it is easy to argue that because $\hat{\theta}$ is consistent, $T^\kappa$ is a negative number with probability tending to one, and therefore never provides evidence against the null.
In case 2, $T^\kappa$ asymptotically follows a standard normal distribution.
To see this, we note that because both associations are non-zero, consistency and asymptotic normality of $\hat{\theta}$ imply that, for large $n$, the sign and order of the estimates are deterministic.
That the signs are asymptotically deterministic implies that $|\hat{\theta}|$ is asymptotically normal (speaking loosely, $|\hat{\theta}_g| \to \text{sign}(\theta_g)\hat{\theta}_g$), and that order is asymptotically deterministic implies that $\hat{\theta}_{\max}$ and $\hat{\theta}_{\min}$ are asymptotically independent.
Therefore, taking the difference between $\hat{\theta}_{\max}$ and $\hat{\theta}_{\min}$, suitably standardized, is asymptotically equivalent to taking a difference between two independent normal random variables with equal means.
Dividing by the asymptotic variances gives the claimed result.

\begin{proposition}
\textit{
In the interior of the null region, i.e., when $\theta_{\max} - \kappa \theta_{\min} < 0$, $T^\kappa$ converges in distribution to $-\infty$.  At all nonzero boundary points of the null region, i.e., when $\theta_{\max} = \kappa \theta_{\min} > 0$, $T^\kappa$ converges in distribution to a standard normal random variable.}
\end{proposition}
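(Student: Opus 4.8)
The plan is to handle the two cases separately but to reduce both to a single structural observation: at every parameter value covered by the proposition, the association is nonzero in each sub-population, so the signs and the max/min labeling of the estimates become asymptotically deterministic. First I would record that, since $\theta_{\max} \ge \theta_{\min} \ge 0$ and $\kappa > 1$, the condition $\theta_{\max} - \kappa\theta_{\min} < 0$ forces $\theta_{\min} > 0$, and likewise $\theta_{\max} = \kappa\theta_{\min} > 0$ forces $\theta_{\min} > 0$; hence in both cases $\theta_1,\theta_2 \neq 0$. I would then note the elementary consequence of asymptotic normality that, whenever $\theta_g \neq 0$, one has $|\hat{\theta}_g| = \mathrm{sign}(\theta_g)\,\hat{\theta}_g$ on an event of probability tending to one, so the continuous mapping theorem gives $\sqrt{n}(|\hat{\theta}_g| - |\theta_g|) \to_d N(0,\sigma_g^2)$.

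For the interior case the argument is direct. By the continuous mapping theorem applied to $(\hat{\theta}_1,\hat{\theta}_2)\mapsto\max(|\hat{\theta}_1|,|\hat{\theta}_2|)$ and $(\hat{\theta}_1,\hat{\theta}_2)\mapsto\min(|\hat{\theta}_1|,|\hat{\theta}_2|)$, the numerator $\hat{\theta}_{\max}-\kappa\hat{\theta}_{\min}$ converges in probability to the strictly negative constant $\theta_{\max}-\kappa\theta_{\min}$. The standardizing factor, the square root of $\hat{\tau}_{\max}+\kappa^2\hat{\tau}_{\min}$, is positive and of exact order $n^{-1/2}$ no matter how ties in the ordering are broken, so it tends to zero from above. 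Slutsky's theorem then yields $T^\kappa \to_d -\infty$, and here no assumption on $|\theta_1|$ versus $|\theta_2|$ is needed.

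The boundary case is where the real work lies. Since $\theta_{\max}=\kappa\theta_{\min}$ with $\kappa>1$ we have $|\theta_1|\neq|\theta_2|$, so the population ordering is strict; letting $A_n$ be the event on which the empirical signs and the empirical max/min labeling coincide with their population counterparts, consistency gives $\mathbb{P}(A_n)\to 1$. On $A_n$ the statistic equals a fixed function of $(\hat{\theta}_1,\hat{\theta}_2)$ in which, say, population $1$ carries the max and population $2$ the min, so it suffices to compute the limit of that surrogate and then transfer it to $T^\kappa$, which differs from the surrogate only on $A_n^c$. Combining the marginal limits above with the independence of the two sub-population samples, I would obtain $\sqrt{n}\{(\hat{\theta}_{\max}-\kappa\hat{\theta}_{\min})-(\theta_{\max}-\kappa\theta_{\min})\}\to_d N(0,\sigma_1^2+\kappa^2\sigma_2^2)$; on the boundary the centering vanishes, and $n^{1/2}(\hat{\tau}_{\max}+\kappa^2\hat{\tau}_{\min})^{1/2}\to_p (\sigma_1^2+\kappa^2\sigma_2^2)^{1/2}$ on $A_n$. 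A final application of Slutsky's theorem produces the claimed $N(0,1)$ limit.

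The main obstacle is the interaction between the non-smooth maps (absolute value, $\max$, $\min$, and the indicator-valued $\hat{\tau}$'s) and the central limit theorem: these maps are only locally linear, so the delta method cannot be applied naively. The clean route I would make precise is to exhibit the ``good'' event $A_n$ on which every such map reduces to a fixed choice of signs and ordering, to evaluate the limit of $T^\kappa$ restricted to $A_n$ where it is an honest linear combination of asymptotically normal, independent terms, and then to conclude that $\mathbb{P}(A_n^c)\to 0$ forces $T^\kappa$ and its surrogate to share a limiting distribution. Care is also required to verify that the degenerate possibility $|\theta_1|=|\theta_2|$ cannot arise at a nonzero boundary point, which is exactly what $\kappa>1$ guarantees.
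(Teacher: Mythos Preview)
Your proposal is correct and follows essentially the same route as the paper: consistency plus continuous mapping for the interior case, and for the nonzero boundary a delta-method argument (which the paper writes via a sign/order matrix $M$ and an $o_p(1)$ remainder, and you make explicit through the high-probability event $A_n$ on which signs and ordering are fixed) followed by Slutsky to standardize. The only difference is expository: your good-event framing spells out why the non-smooth $|\cdot|$, $\max$, $\min$ maps admit the linearization that the paper simply asserts under the heading ``delta method.''
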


In case 3, when both associations are zero, the asymptotic distribution of the test statistic is more complicated.
More specifically, $\theta_g = 0$ implies that, asymptotically, $|\hat{\theta}_g|$ follows a half-normal distribution instead of a normal distribution.
Moreover, the order of $|\hat{\theta}|$ remains random in the limit.
This gives rise to a non-standard limiting distribution.
In particular, unlike cases 1 and 2, the limiting distribution depends on the asymptotic variances of the sub-population estimates (and also the ratio of the sample sizes in the unbalanced case).
We are nonetheless able to derive an analytic expression for the distribution function, stated in Proposition 2.
We reserve this statement for the Appendix, as the expression is cumbersome.

By Propositions 1 and 2, we can calculate the p-value as the maximum of the tail probabilities in cases 2 and 3.
That is,
\begin{align}
\rho^\kappa(t) = \max\left\{1-\Phi(t), 1 - F^\kappa(t)\right\},
\label{eq:rdpval}
\end{align}
where $\Phi(\cdot)$ denotes the standard normal distribution function and $F^\kappa(\cdot) \equiv \lim_{n \to \infty} \mathbb{P}(T^\kappa < t|\theta = \theta_0)$ is the limiting distribution function for the test statistic when $\theta = 0$.
Though the limiting distribution under $\theta = 0$ is non-standard, tail probabilities can be calculated easily, as we show in the Appendix (Remark 1).  The p-value is therefore simple to calculate.


We have derived an analytic approximation for the power of the likelihood ratio test.
In Proposition 2, we more generally characterize the limiting distribution of the test statistic under hypotheses of the form $(\theta_1, \theta_2) = n^{-1/2}(c_1, c_2)$.
The local asymptotic power of the proposed test at the level $\alpha$ is available by considering $\beta^\kappa_\alpha(c_1, c_2) \equiv \lim_{n \to \infty} \mathbb{P}\left(T^\kappa > t^*_{1-\alpha} | \theta = n^{-1/2}(c_1, c_2)\right)$, where we define $t^*_{1 - \alpha}$ as the maximum of the $1-\alpha$ quantiles of the limiting distribution of $T^\kappa$ under scenarios 2 and 3 described above. 
An analytic finite-sample approximation of the power can be calculated as $\beta^\kappa_{\alpha}\left(n^{1/2}\theta_1, n^{1/2}\theta_2\right)$.

A contour plot of the local asymptotic power is given in Figure \ref{fig:power}.
We consider both cases of equal and unequal variance of the estimators.
We observe that the likelihood ratio test has low power when the strongest effect $\max\left\{|c_1|, |c_2| \right\}$ is small, and power improves considerably when the strongest effect grows.
Additionally, we find that in the presence of unequal variance, the test has greater power when the weakest effect is estimated with higher precision than the strongest effect.

\begin{figure}[!h]
\centering
\includegraphics{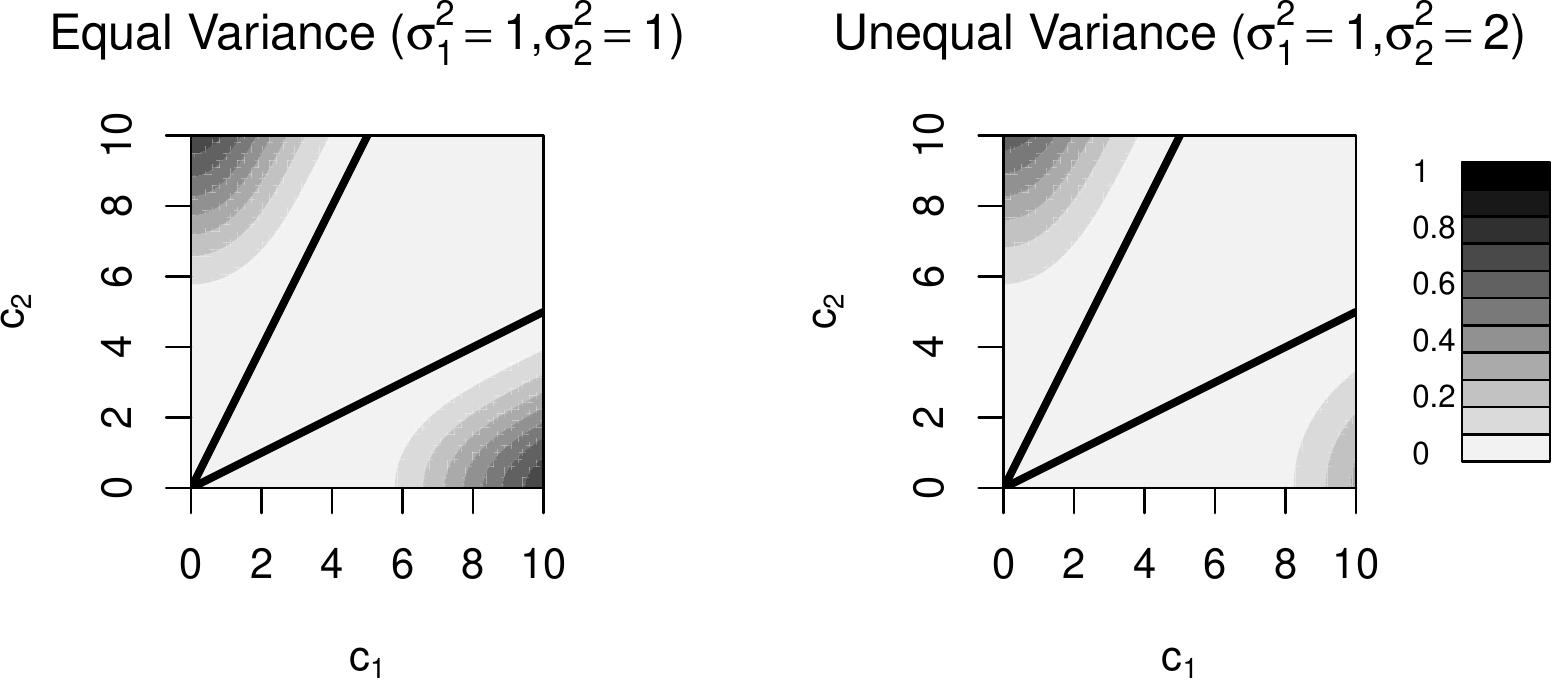}
\caption{Contour plot of local asymptotic power of relative difference likelihood ratio test with $\kappa = 2$ and $\alpha = .05$.   Bold lines represent the boundary of the null region. Settings of equal and unequal asymptotic variance of estimators are represented. }
\label{fig:power}
\end{figure}

\subsection{Quantifying Relative Difference in Effect by Inverting Likelihood Ratio Test}

Rather than perform a the likelihood ratio test for a pre-specified $\kappa$, one may prefer to directly estimate the relative difference in effect.
Na\"ively, one might consider estimating the relative difference in effect as $\hat{\theta}_{\max}/\hat{\theta}_{min}$.
However, this estimate will behave poorly when $\theta_1 = \theta_2 = 0$ and should therefore not be reported in practice. 

To overcome this issue, we propose to quantify the relative difference in effect size by inverting the likelihood ratio test, similar to \cite{fieller1940biological}.
We define 
\[
\kappa_{\max}^\alpha \equiv \sup\{\kappa: \kappa > 1, \rho^\kappa(t) < \alpha\}
\] 
as the largest $\kappa > 1$ such that the likelihood ratio test rejects the null hypothesis at the $\alpha$ level.
When the likelihood ratio test fails to reject for all $\kappa > 1$, we will use the convention $\kappa_{\max}^{\alpha} = 1$.
We find it appealing that $\kappa^\alpha_{\max}$ converges to $1$ if $\theta_{\max} = \theta_{\min}$, and $\kappa_{\max}$ should approach but not exceed $\theta_{\max}/\theta_{\min}$ otherwise.
We discuss calculation of $\kappa^\alpha_{\max}$ in the Appendix.

\subsection{Simultaneous Test of Qualitative Interactions}

When it is of interest to identify both absence/presence and positive/negative qualitative interactions, it may be desirable to test for both simultaneously.
In this section, we construct an omnibus test that achieves exact control of the size asymptotically.
  
We define the omnibus qualitative interaction null hypothesis as
\[ 
H_0^{\text{P/N},\kappa}: \text{ Both } H_0^{\text{P/N}} \text{ and } H_0^{\kappa} \text{ hold.} 
\]
The null region $\Theta_0^{\text{P/N}, \kappa}$ is the intersection of the positive/negative and relative difference null regions, as depicted in Figure \ref{fig:projOmnibus}.
We observe that as $\kappa \to \infty$, the omnibus null and alternative regions tend to the positive/negative null and alternative regions.

\begin{figure}[!h]
\centering
\includegraphics[scale = .825]{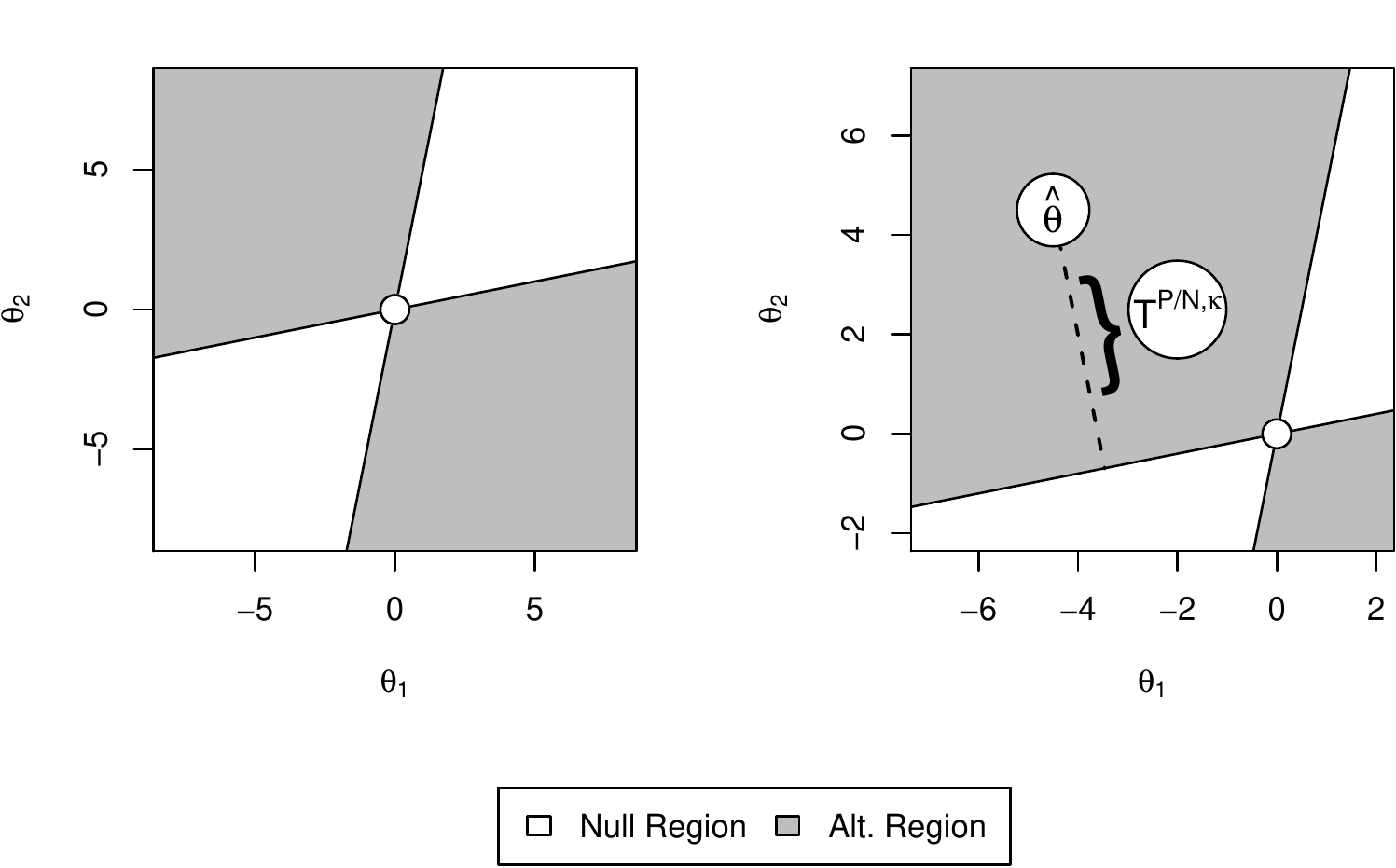}
\caption{(Left) Null and alternative region for omnibus qualitative interaction hypothesis. (Right) Geometric interpretation of likelihood ratio statistic for omnibus qualitative interaction hypothesis.}
\label{fig:projOmnibus}
\end{figure}

To construct the likelihood ratio test, we proceed using similar arguments to those presented in Section 3.2.  
The likelihood ratio statistic $T^{\text{P/N},\kappa}$ is the distance between the estimate $\hat{\theta}$ and its projection onto the null region $\Theta_0^{\text{P/N}, \kappa}$, inversely weighted by the asymptotic variance of $\hat{\theta}$.
A simple expression for $T^{\text{P/N}, \kappa}$ is given in Lemma 2.

\begin{lemma}
The likelihood ratio statistic $T^{\text{\normalfont{P/N}}, \kappa}$ can be written as
\[
T^{\text{\normalfont{P/N}},\kappa} =
\min\left\{
\frac{\left(\hat{\theta}_1 - \kappa \hat{\theta}_2\right)^2}{n^{-1}(\sigma_1^2 + \kappa^2\sigma^2_2)}
,
\frac{\left(\kappa \hat{\theta}_1 - \hat{\theta}_2\right)^2}{n^{-1}(\kappa^2\sigma_1^{2} + \sigma_2^2)}
\right\}
\mathds{1}\left(\hat{\theta} \in \Theta_1^{\text{\normalfont{P/N}},\kappa}\right).
\]
\end{lemma}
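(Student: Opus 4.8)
The plan is to first pin down the geometry of the omnibus null region and then reduce the projection to a pair of elementary line-distance computations. Writing $W = n\,\mathrm{diag}(\sigma_1^{-2},\sigma_2^{-2})$ for the weighting matrix, so that $T^{\text{P/N},\kappa} = \min_{a\in\Theta_0^{\text{P/N},\kappa}} (\hat\theta - a)^\top W (\hat\theta - a)$, I would begin by describing $\Theta_0^{\text{P/N},\kappa} = \Theta_0^{\text{P/N}}\cap\Theta_0^\kappa$ explicitly. Intersecting the relative-difference null (the four quadrant cones bounded in absolute direction by $(\kappa,1)'$ and $(1,\kappa)'$) with the positive/negative null (the closed first and third orthants) removes the second- and fourth-quadrant cones and leaves exactly two closed cones: a cone $S_A$ in the first orthant bounded by the rays in directions $(\kappa,1)'$ and $(1,\kappa)'$, and the opposite cone $S_B$ in the third orthant bounded by the rays in directions $(-\kappa,-1)'$ and $(-1,-\kappa)'$.

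The decisive observation is that, although $\Theta_0^{\text{P/N},\kappa}=S_A\cup S_B$ is not convex, its boundary is the union of two \emph{complete} lines through the origin. Indeed, the ray in direction $(\kappa,1)'$ and the ray in direction $(-\kappa,-1)'$ together form the line $L_1=\{\theta:\theta_1=\kappa\theta_2\}$, and the rays in directions $(1,\kappa)'$ and $(-1,-\kappa)'$ together form $L_2=\{\theta:\theta_2=\kappa\theta_1\}$. Hence $\partial\Theta_0^{\text{P/N},\kappa}=L_1\cup L_2$, and since the null region is closed we moreover have $L_1\cup L_2\subseteq\Theta_0^{\text{P/N},\kappa}$. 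I would then split into two cases. If $\hat\theta\in\Theta_0^{\text{P/N},\kappa}$ the minimizing $a$ is $\hat\theta$ itself, so $T^{\text{P/N},\kappa}=0$, matching the indicator $\mathds{1}(\hat\theta\in\Theta_1^{\text{P/N},\kappa})=0$. If $\hat\theta\in\Theta_1^{\text{P/N},\kappa}$, then since $\Theta_0^{\text{P/N},\kappa}$ is closed and $\hat\theta$ lies outside it, the nearest point must lie on the boundary $L_1\cup L_2$, giving the lower bound $T^{\text{P/N},\kappa}\ge\min\{d_W(\hat\theta,L_1)^2,d_W(\hat\theta,L_2)^2\}$; the matching upper bound holds because each $L_i$ lies entirely in the null region, so the weighted distance to the null region equals $\min\{d_W(\hat\theta,L_1)^2, d_W(\hat\theta,L_2)^2\}$, where $d_W$ denotes distance in the $W$-metric.

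It then remains to evaluate the two line distances. Writing each line as $\{a:c^\top a=0\}$, the squared $W$-distance is the standard $d_W(\hat\theta,\{c^\top a=0\})^2=(c^\top\hat\theta)^2/(c^\top W^{-1}c)$, which I would verify in one line by minimizing $(\hat\theta-a)^\top W(\hat\theta-a)$ subject to $c^\top a =0$ with a Lagrange multiplier. Taking $c=(1,-\kappa)'$ for $L_1$ gives $c^\top\hat\theta=\hat\theta_1-\kappa\hat\theta_2$ and $c^\top W^{-1}c=n^{-1}(\sigma_1^2+\kappa^2\sigma_2^2)$, while taking $c=(\kappa,-1)'$ for $L_2$ gives $\kappa\hat\theta_1-\hat\theta_2$ and $n^{-1}(\kappa^2\sigma_1^2+\sigma_2^2)$; these are exactly the two ratios in the statement. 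Combining the two cases through the indicator yields the claimed formula.

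The step I expect to require the most care is the geometric reduction in the second paragraph: one must confirm that the combined boundary of the two cones is genuinely two full lines rather than four half-lines terminating at the apex, since it is precisely this fact that lets me replace projection onto a nonconvex cone --- which would otherwise force a case analysis over the facet rays and the apex --- by unconstrained projection onto two lines. Establishing $\partial\Theta_0^{\text{P/N},\kappa}=L_1\cup L_2$ together with $L_1\cup L_2\subseteq\Theta_0^{\text{P/N},\kappa}$ carefully (using $\kappa>1$ so that the two lines are distinct) is what validates the min-of-two-line-distances representation; the remaining computations are routine.
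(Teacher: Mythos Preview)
Your proof is correct and lands on the same endpoint as the paper---the weighted distance from $\hat\theta$ to each of the two boundary lines $\theta_1=\kappa\theta_2$ and $\theta_2=\kappa\theta_1$---but your justification is tidier. The paper's proof splits into cases according to where $\hat\theta$ sits in the alternative region and, in each case, names the two spans that could furnish the nearest null point before deferring to the algebra of Lemma~1; your observation that the boundary of the closed null region is exactly $L_1\cup L_2$, with both full lines contained in $\Theta_0^{\text{P/N},\kappa}$, dispenses with that case analysis and yields the $\min$-of-two-line-distances representation directly. The Lagrange-multiplier line-distance computation you give is equivalent to the projection algebra the paper invokes from Lemma~1.
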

Unsurprisingly, the likelihood ratio statistic for the omnibus test approaches the likelihood ratio statistic for the Gail-Simon likelihood ratio statistic for positive/negative interactions in the limit of large $\kappa$.
The tests will, therefore, be nearly identical for sufficiently large $\kappa$.

To characterize the asymptotic behavior of the omnibus test statistic at each location under the null, we use similar arguments to those in the previous section.
If $\theta$ belongs to the interior of the null region, $T^{\text{P/N},\kappa}$ converges in probability to zero.
If $\theta$ belongs to the boundary of the null region and is non-zero, $T^{\text{P/N},\kappa}$ converges weakly to a uniform mixture of zero and the chi-squared distribution with one degree of freedom.
If $\theta$ is zero, the limiting distribution of $T^{\text{P/N},\kappa}$ is non-standard, though it can be characterized nonetheless.
Formal statements of asymptotic properties of $T^{\text{P/N},\kappa}$ are given in Propositions 3 and Remark 2 (the statement of Remark 2 is also cumbersome, and is reserved for the Appendix).

\setcounter{proposition}{2}

\begin{proposition}
\textit{
If $\theta$ belongs to the interior of the null region $\Theta_0^{\text{P/N},\kappa}$, $T^{\text{\normalfont{P/N}}, \kappa}$ converges in distribution to zero.
If $\theta$ is on the boundary of the null region, but $\theta \neq 0$, $\mathbb{P}(T^{\text{\normalfont{P/N}}, \kappa} > t) \to \frac{1}{2}P\left(\chi^2_1 > t\right)$ as $n \to \infty$.
}
\end{proposition}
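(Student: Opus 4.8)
The plan is to handle the two cases separately, exploiting the fact that near any fixed point of the null region the piecewise geometry of $\Theta_0^{\text{P/N},\kappa}$ collapses to something simple. For the interior case, if $\theta$ lies in the interior of $\Theta_0^{\text{P/N},\kappa}$, then by $\sqrt{n}$-consistency $\hat{\theta} \to_p \theta$, so $\hat{\theta}$ falls in the open null region with probability tending to one. On that event the indicator $\mathds{1}(\hat{\theta} \in \Theta_1^{\text{P/N},\kappa})$ vanishes, whence $T^{\text{P/N},\kappa} = 0$ with probability tending to one and $T^{\text{P/N},\kappa} \to_d 0$. This disposes of the first claim.

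For the boundary case, I would first reduce to a single boundary ray by symmetry. Relabeling the two sub-populations swaps $(\hat{\theta}_1,\hat{\theta}_2)$, $(\sigma_1^2,\sigma_2^2)$, and the two quadratic forms in Lemma 2 simultaneously, while leaving the alternative region invariant; sign reflection $\hat{\theta} \mapsto -\hat{\theta}$ leaves both quadratic forms and the alternative region unchanged. Hence it suffices to treat the case where $\theta$ lies on the ray spanned by $(\kappa,1)'$ in the first quadrant, i.e. $\theta_1 = \kappa\theta_2 > 0$. Next I would localize: since $\theta_1,\theta_2 > 0$ and $\hat{\theta}\to_p\theta$, with probability tending to one $\hat{\theta}$ lies in the first quadrant. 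Because $\kappa > 1$, the competing boundary ray spanned by $(1,\kappa)'$ and the origin are bounded away from $\theta$, so on a small fixed ball around $\theta$ the alternative region is \emph{exactly} the half-space $\{\hat{\theta}_1 - \kappa\hat{\theta}_2 > 0\}$, the second quadratic form stays bounded away from zero (and diverges after multiplication by $n$), and the first form is $O_p(1)$. Confining $\hat{\theta}$ to this ball by consistency, the minimum is attained by the first form and the indicator reduces, so that up to an event of vanishing probability
\[
T^{\text{P/N},\kappa} = \frac{\left(\hat{\theta}_1 - \kappa\hat{\theta}_2\right)^2}{n^{-1}(\sigma_1^2 + \kappa^2\sigma_2^2)}\,\mathds{1}\left(\hat{\theta}_1 - \kappa\hat{\theta}_2 > 0\right).
\]

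Finally I would apply the central limit theorem. Writing $Z_n = \sqrt{n}(\hat{\theta}_1 - \kappa\hat{\theta}_2)/\sqrt{\sigma_1^2 + \kappa^2\sigma_2^2}$ and using $\theta_1 - \kappa\theta_2 = 0$ together with the asymptotic normality and independence of $\hat{\theta}_1$ and $\hat{\theta}_2$, one gets $Z_n \to_d N(0,1)$ and $T^{\text{P/N},\kappa} = Z_n^2\,\mathds{1}(Z_n > 0) + o_p(1)$. For any $t > 0$, using symmetry of the standard normal,
\[
\mathbb{P}\left(T^{\text{P/N},\kappa} > t\right) \to \mathbb{P}\left(Z^2 > t,\ Z > 0\right) = \tfrac{1}{2}\mathbb{P}\left(\chi^2_1 > t\right),
\]
with $Z \sim N(0,1)$, which is the claim.

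I expect the main obstacle to be the localization step: rigorously arguing that asymptotically only the nearby boundary line enters the minimum in Lemma 2 and that the full alternative-region indicator collapses to the half-space indicator $\mathds{1}(\hat{\theta}_1 - \kappa\hat{\theta}_2 > 0)$. This hinges on $\theta \neq 0$ and $\kappa > 1$ keeping both the competing ray and the origin a positive distance from $\theta$; I would make it quantitative by fixing a radius small enough that the ball around $\theta$ meets the alternative region in exactly one half-space and lies entirely in the open first quadrant, then invoking consistency to confine $\hat{\theta}$ to that ball with probability tending to one.
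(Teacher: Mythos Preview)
Your argument is correct. For the interior case it matches the paper exactly. For the boundary case you and the paper share the same endgame (reduce by symmetry to $\theta_1=\kappa\theta_2>0$, then use the CLT for the linear form $\hat\theta_1-\kappa\hat\theta_2$), but you reach it by a different route. You localize first: confine $\hat\theta$ to a small ball around $\theta$ on which the alternative-region indicator collapses to the half-space $\{\hat\theta_1>\kappa\hat\theta_2\}$ and the second quadratic form in Lemma~2 diverges, so that $T^{\text{P/N},\kappa}$ reduces to $Z_n^2\mathds{1}(Z_n>0)$ up to an event of vanishing probability. The paper instead performs an \emph{exact} algebraic decomposition valid for all $\theta$: it splits $\mathbb{P}(T^{\text{P/N},\kappa}>t)$ into four terms according to the signs of the two normalized linear forms, shows two of them are identically zero (those sign patterns force $\hat\theta\in\Theta_0^{\text{P/N},\kappa}$), drops the indicator in the remaining two (those sign patterns force $\hat\theta\in\Theta_1^{\text{P/N},\kappa}$), and only then specializes to the boundary point, where one of the two surviving terms vanishes because $\kappa\hat\theta_1-\hat\theta_2\to(\kappa^2-1)\theta_2>0$. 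Your approach is more direct for this proposition; the paper's exact decomposition has the payoff that the resulting identity is reused verbatim in the local-power calculation of Proposition~4, where no single boundary ray dominates and a localization argument would not suffice.
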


Calculating the p-value for the omnibus test is no more difficult than calculating the p-value for the absence/presence test.
Defining $F^{\text{P/N},\kappa}(t) \equiv \lim_{n \to \infty}\mathbb{P}(T^{\text{P/N}} < t|\theta = 0)$, the p-value can be calculated as
\begin{align}
\rho^{\kappa, \text{P/N}}(t) = \max\left\{\mathbb{P}\left(\chi^2_1 > t\right), 1 - F^{\text{P/N},\kappa}(t)  \right\},
\label{eqn:OMNpval}
\end{align}
where $t$ is the value of  the test statistic $T^{\text{P/N},\kappa}$ calculated on the observed data.
We characterize the local asymptotic power of the omnibus likelihood ratio test in Proposition 4 in the Appendix.

\section{Simulation Study}

In a Monte Carlo simulation study, we examine how type-I error probabilities and statistical power of the likelihood ratio tests for qualitative interactions are affected by signal strength, sample size, and selection of $\kappa$.
Additionally, we examine how $\kappa^\alpha_{\max}$ depends on the true sub-population effects and the sample size.

We generate random observations $(Y_g, X_g)$ in sub-population $g$ under the linear model:
\[
Y_{g} = \theta_gX_g + \epsilon;\, X_g \sim N(0,1);\, \epsilon \sim N(0,1)
.\]
Here, $X_g$ is the predictor of interest, $Y_g$ is the response, and $\epsilon$ is white noise.
The measure of association in which we are interested is the regression coefficient $\theta_g$.
We fix $\theta_1 = 1$ and consider $\theta_2 \in \{-1,-.9,....,.9,1\}$.
A total of $1000$ synthetic data sets are randomly generated for each $\theta_2$ and $n \in \{50, 100\}$.

For each synthetic data set, we perform both the relative difference likelihood ratio test and the omnibus qualitative interaction likelihood ratio test with $\kappa = 2$ and $\kappa = 4$; we use a significance level of $\alpha = .05$.
We additionally calculate $\kappa^\alpha_{\max}$ with $\alpha = .05$.
Parameter estimation is performed with ordinary least squares, and model-based estimates of the standard error are used.

In Figure \ref{fig:PowerSim}, we plot the Monte Carlo estimate of the rejection probability of the relative difference likelihood ratio test over the range of $\theta_2$.
We see that the test achieves control of size for both choices of $\kappa$ and both sample sizes.
We observe that power is largest when $|\theta_2|$ is near zero, as we would expect.
Power is moderately strong when $\kappa = 2$ but fairly poor when $\kappa = 4$.
With $\kappa = 4$, the likelihood ratio test has almost zero power when the sample size is small, though we see modest improvement when the sample size is larger.

Figure \ref{fig:PowerSim} also shows the estimated rejection probabilities of the omnibus test over $\theta_2$.
Control of size is achieved in both large and small samples, as expected.
We note that for the omnibus test, power increases as $\theta_2$ tends to $-1$, and power is uniformly larger when $\kappa = 2$ than when $\kappa = 4$.

\begin{figure}[!h]
\centering
\includegraphics{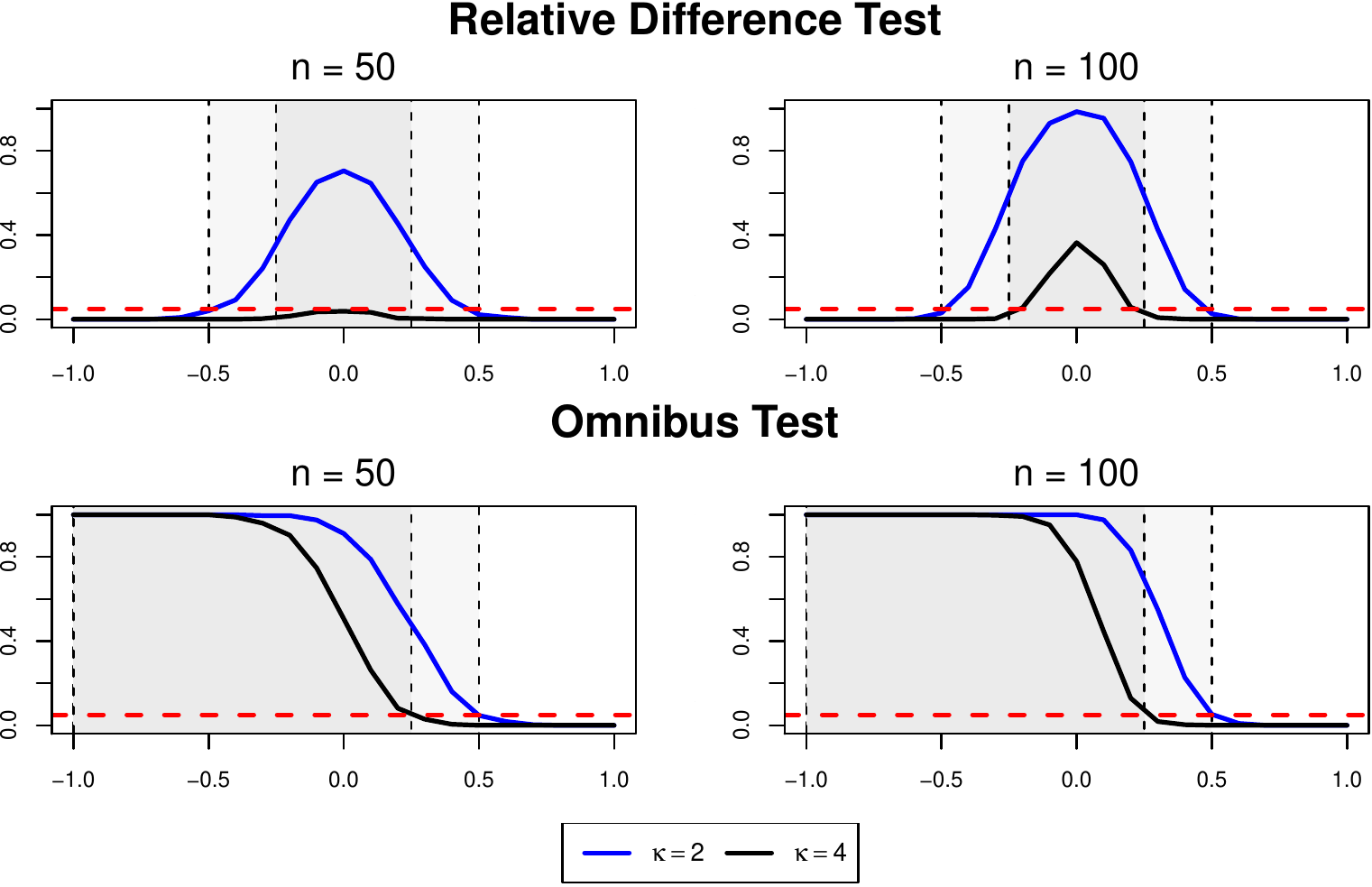}
\caption{Monte Carlo estimate of rejection probability of the relative difference and omnibus likelihood ratio tests.  The shaded light grey and dark grey areas correspond to sets of $\theta_2$ such that the null hypothesis holds with $\kappa = 2$ and $\kappa = 4$, respectively.  The dashed red line denotes the specified size $\alpha = .05$. A color version of this figure is available in the online version of the article.}
\label{fig:PowerSim}
\end{figure}

In Figure \ref{fig:KappaMaxSim}, we plot the quantiles of the $\kappa_{\max}^{\alpha}$ values from 1000 synthetic data sets for each $\theta_{2}$.
We expect that for a fixed $\theta_2$, most $\kappa_{\max}^{\alpha}$ should approach but not exceed $\theta_1/|\theta_2|$ as sample size increases; our simulations are consistent with this expectation.
We note that when the sample size is small, $\kappa^\alpha_{\max}$ tends to underestimate the relative difference in effect size.

\begin{figure}[!h]
\centering
\includegraphics{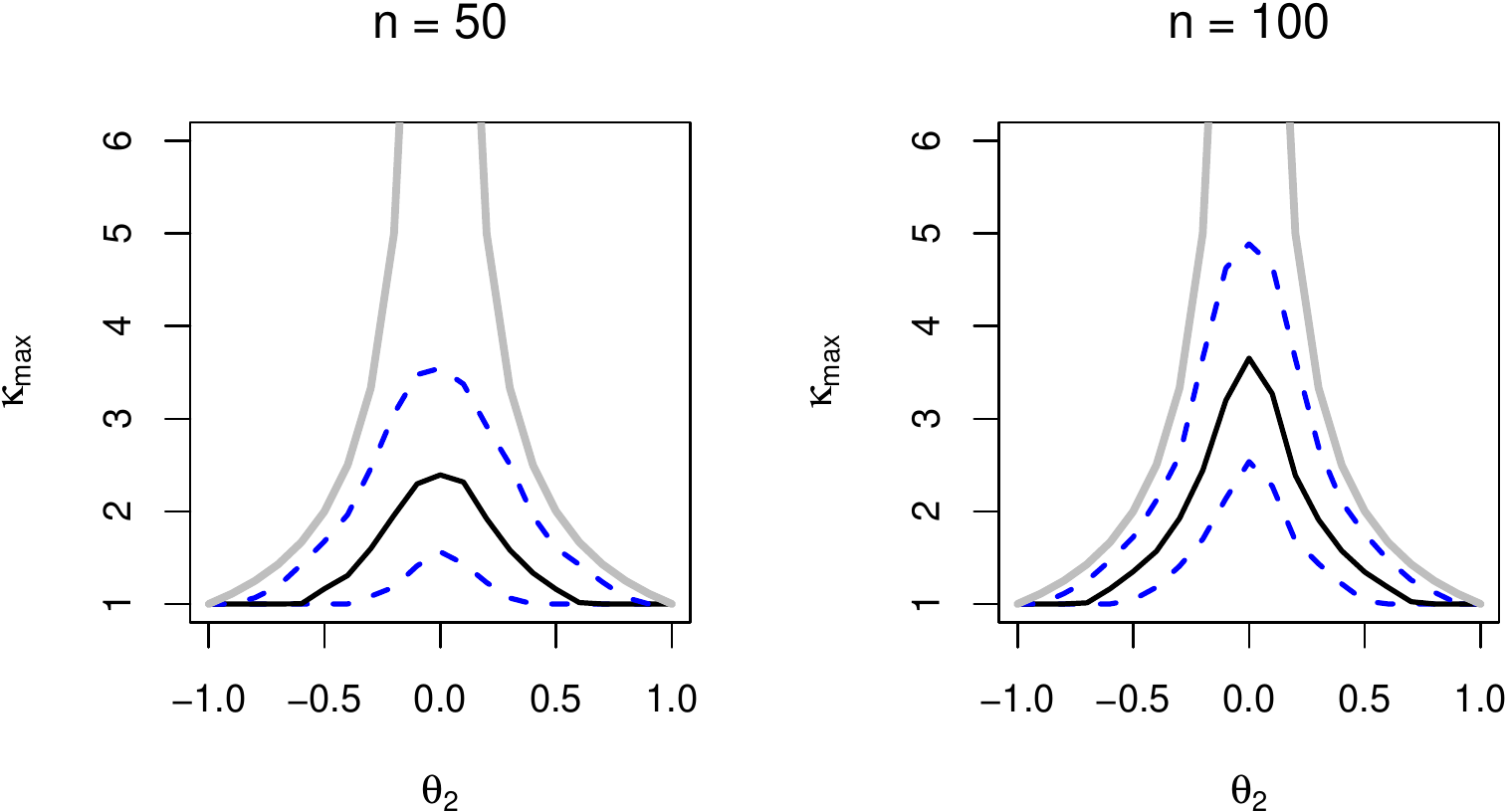}
\caption{Distribution of $\kappa_{\max}^\alpha$ calculated on synthetic data.  The .10, .50, and .90 quantiles are displayed in.  The grey curve represents $|\theta_1|/|\theta_2|$, the value $\kappa_{\max}^\alpha$ is expected to approach for a given $\theta_2$. A color version of this figure is available in the online version of the article.}
\label{fig:KappaMaxSim}
\end{figure}

\section{Data Example}

In this example, we investigate genetic differences in breast cancer sub-types.
Classification of breast cancer based on expression of estrogen receptor (ER) is known to be associated with clinical outcomes.
Approximately 70\% of breast cancers are estrogen receptor positive (ER+) cancers, meaning that estrogen causes cancer cells to grow \citep{lumachi2013treatment}; breast cancers are otherwise estrogen receptor negative (ER-). 
Patients with ER+ breast cancer tend to experience better clinical outcomes than ER- patients \citep{carey2006race}.

We conduct an analysis using publicly available data from The Cancer Genome Atlas (TCGA) \citep{weinstein2013cancer}.
We use clinical data and gene expression data from a total of 806 ER+ patients and 237 ER- patients.

We first investigate the differences between the genetic networks in ER+ and ER- breast cancer.
Both ER+ and ER- breast cancer are expected to have similar pathways, but identifying differences between them may be key to understanding the underlying disease mechanism.
We then conduct an analysis to assess whether any genes in a set known to be associated with breast cancer are strongly prognostic of disease outcomes in only one of the estrogen receptor groups.

\subsection{Differential Network Analysis}

Our objective is to determine whether there are any pairs of genes that are much more strongly associated in one estrogen receptor group than the other.
We consider the set of $p = 145$  genes in the Kyoto Encyclopedia of Genes and Genomes (KEGG) \citep{kanehisa2000kegg} breast cancer pathway and measure the association between gene expression levels using the Pearson correlation.

We test the relative difference null hypothesis for each pair of genes with $\kappa = 2$.
In Figure \ref{fig:Data}, we display the pairs of genes that are statistically significant at the $\alpha = .10$ level after a Bonferroni adjustment.
We find that each of the genes progesterone (PGR), insulin-like growth factor 1 (IGF1R), and  estrogen receptor 1 (ESR1) have multiple differential connections; each belongs to at least two pairs such that the association is twice as strong in the ER+ population than in the ER- population.
These genes have been shown in the literature to be associated with sub-type and prognosis \citep{farabaugh2015role, reinert2019association, kurozumi2017power}.


\subsection{Prognostic Value of Biomarkers}

The goal of this analysis is to assess whether any of the KEGG genes have a stronger association with time to death in one estrogen receptor group than in the other.
For each gene, we fit a univariate Cox proportional-hazards model with time to death as the outcome in both of the estrogen receptor groups separately; we measure association using the log hazard ratio.
A total of 64 deaths occurred in the ER+ group, and 33 deaths occurred the ER- group.
We calculate $\kappa^\alpha_{\max}$ with $\alpha = .10$ for each gene.

In Figure \ref{fig:Data}, we compare the log hazard ratios of the ER+ and ER- groups in a scatterplot.
Though the log hazard ratios for most genes are similar between subgroups,  there are twelve genes with $\kappa^\alpha_{\max}$ larger than one.
A complete list is available in Table 1.
The two genes with the strongest interactions are Growth Factor Receptor-bound Protein 2 (GRB2; $\kappa^{\alpha}_{\max} = 2.04$), which has a stronger association in the ER- group, and Adenomatous Polyposis Coli (APC; $\kappa^{\alpha}_{\max} = 1.91$), which has a stronger association in the ER+ group.
Both genes have been hypothesized to be associated with breast cancer carcinogenesis \citep{daly1994overexpression, jin2001adenomatous}.

\begin{table}[ht]
\centering
\begin{tabular}{lllr}
  \hline
Gene & ER+ Log HR (SE) & ER- Log HR (SE) & $\kappa^{\alpha}_{\max}$ \\ 
  \hline
GRB2 & -0.06 (0.31) & -1.66 (0.68) & 2.04 \\ 
  APC & 1.34 (0.32) & -0.09 (0.33) & 1.91 \\ 
  BAX & -1.05 (0.24) & 0.04 (0.36) & 1.53 \\ 
  PIK3CA & 1.13 (0.28) & 0.14 (0.32) & 1.51 \\ 
  SOS2 & 1.13 (0.36) & -0.1 (0.37) & 1.33 \\ 
  MAP2K2 & -0.87 (0.27) & 0.03 (0.35) & 1.22 \\ 
  GADD45G & -0.52 (0.13) & -0.07 (0.19) & 1.21 \\ 
  HES5 & 0.02 (0.2) & 0.51 (0.18) & 1.19 \\ 
  WNT2 & -0.36 (0.09) & 0 (0.17) & 1.14 \\ 
  DLL4 & 0.09 (0.2) & 0.68 (0.27) & 1.10 \\ 
  FRAT2 & -1.22 (0.31) & -0.45 (0.29) & 1.08 \\ 
  SOS1 & 1.19 (0.3) & -0.34 (0.42) & 1.01 \\ 
   \hline
\end{tabular}
\caption{KEGG genes that are more strongly associated with time to death in one ER group than the other, i.e., $\kappa^\alpha_{\max} > 1$ with $\alpha = .10$.}
\end{table}

\begin{figure}[!h]
\centering
\begin{subfigure}
  \centering
  \includegraphics[width=.4\linewidth]{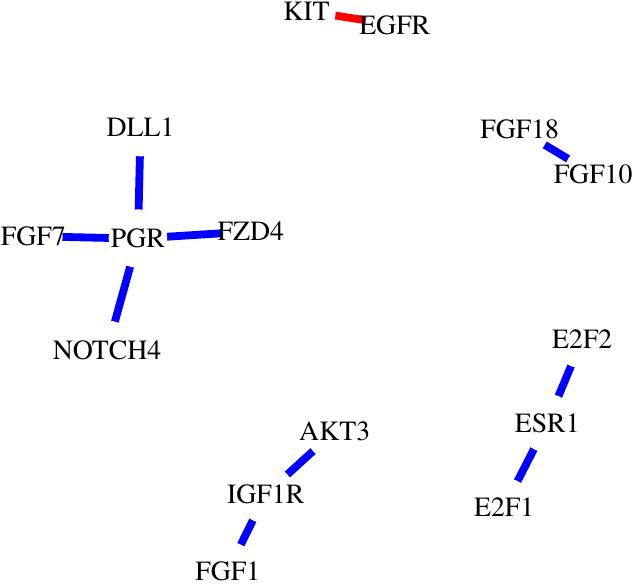}
  \label{fig:a}
\end{subfigure}
\begin{subfigure}
  \centering
  \includegraphics[width=.4\linewidth]{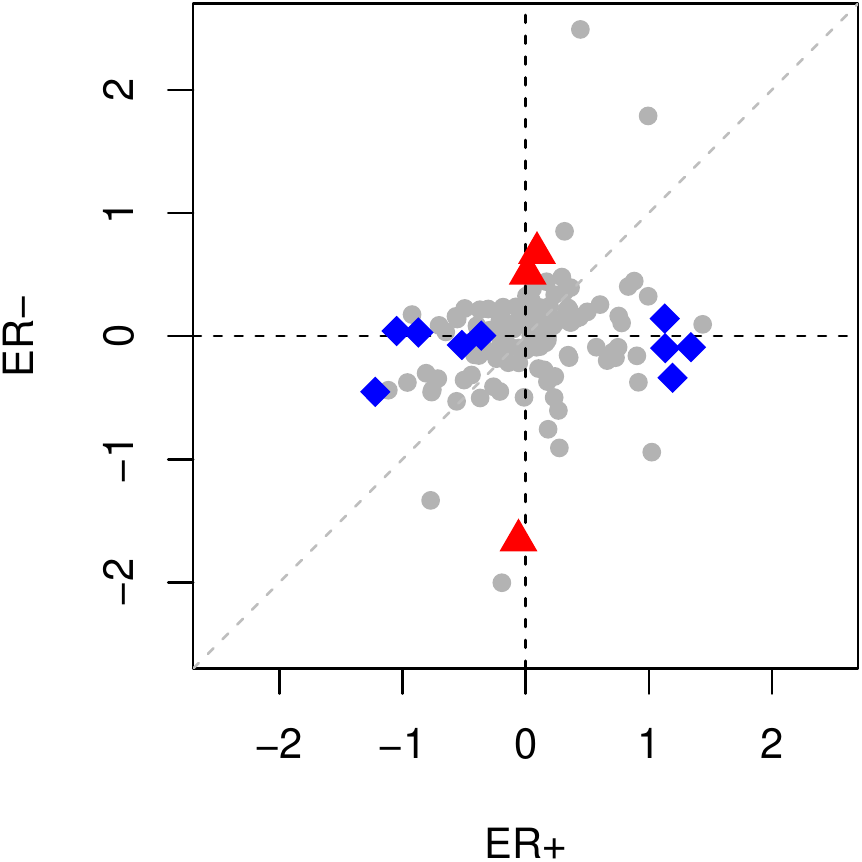}
  \label{fig:b}
\end{subfigure}
\caption{(Left) Pairs of genes in the KEGG breast cancer pathway for which we reject the relative difference hypothesis with $\kappa = 2$.  Blue edges indicate associations that are stronger in the ER+ group, and red edges indicate associations that are stronger in the ER- group. (Right) Log hazard ratios for KEGG genes in ER+ and ER- groups.  The gray dashed line represents the 45-degree line.  Blue diamonds and red triangles indicate genes for which $\kappa^\alpha_{\max} > 1$ with $\alpha = .10$, where the largest log hazard ratio is in the ER+ group and ER- group, respectively. A color version of this figure is available in the online version of the article.}
\label{fig:Data}
\end{figure}


\section{Discussion}

We have proposed a general framework for inference about absence/presence qualitative interactions.
We argued that na\"ive procedures rely upon untenable conditions because the absence/presence hypothesis is ill-posed.
We thus proposed to relax the problem in order to conduct well-calibrated inference that maintains the absence/presence interpretation and only requires mild assumptions.

In simulations, we found that our methodology has low power when signal is weak or sample sizes are small.
To an extent, this is just a feature of the problem; naturally, one would require even more information to detect qualitative interactions than what is required to detect quantitative interactions.
However, we provide no guarantee that our methodology is optimal, as tests for composite hypotheses based upon supremum p-values can be conservative in practice \citep{bayarri2000p}.

Nonetheless, our framework is interpretable and provides a natural approach for quantifying differences in strength of association by sub-population in general settings.
Though we only considered measures of marginal association in our examples, our method can be used with conditional measures of association as well; we only require that asymptotically normal estimates are available.
In particular, our approach remains valid in the high-dimensional setting, where asymptotically normal estimates can be obtained using the techniques of, e.g., \cite{van2014asymptotically} and \cite{zhang2014confidence}.
Finally, our method has good utility in the analysis of genomics data, as we demonstrated in our real data example.

\section{Acknowledgements}

The authors gratefully acknowledge the support of the NSF Graduate Research Fellowship Program under grant DGE-1762114 as well as NSF grant DMS-1561814 and NIH grant R01-GM114029.
Any opinions, findings, and conclusions or recommendations expressed in this material are those of the authors and do not necessarily reflect the views of the funding agencies.

\newpage
\bibliographystyle{biom}
\bibliography{qualint-arxiv}

\section*{Software Availability}

Implementation of the proposed methodology and code to reproduce results from the data analysis are available at \texttt{https://github.com/awhudson/QualitativeInteractions}.
The TCGA data are available using the public R package \texttt{RTCGA}.

\newpage

\appendix


\section*{Appendix}

\subsection*{Theoretical Results for Relative Difference Likelihood Ratio Test}

\setcounter{lemma}{0}
\setcounter{proposition}{0}

Below, we generalize Lemma 1 and Proposition 1 to the setting of unbalanced sample sizes $n_1 \neq n_2$.

\begin{lemma}
 The likelihood ratio test statistic $T^{\kappa}$ can be written as
\[
T^{\kappa} = 
\frac{\hat{\theta}_{\max} - \kappa\hat{\theta}_{\text{min}}}{\sqrt{\hat{\tau}_{\max} + \kappa^2\hat{\tau}_{\min}}},
\]
where $\hat{\tau}_{\max} = n_1^{-1}\sigma^2_1\mathds{1}(|\theta_1| = \theta_{\max}) + n_2^{-1}\sigma^2_2\mathds{1}(|\theta_2| = \theta_{\max})$, and $\hat{\tau}_{\min} = n_1^{-1}\sigma^2_1\mathds{1}(|\theta_1| = \theta_{\min}) + n_2^{-1}\sigma^2_2\mathds{1}(|\theta_2| = \theta_{\min})$. 
\end{lemma}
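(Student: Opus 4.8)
The plan is to prove the lemma by directly solving the constrained minimization that defines the likelihood ratio statistic, exploiting the symmetry of the null region to reduce a projection onto four cones to a projection onto a single line. Recall that $T^\kappa$ is (the signed form of) the weighted distance between $\hat\theta$ and $\Theta_0^\kappa$, with squared objective $\sum_{g\in\{1,2\}} n_g\sigma_g^{-2}(\hat\theta_g-a_g)^2$, and that $\Theta_0^\kappa$ is the union of four congruent cones, one per orthant, each bounded by the rays in the directions $(\kappa,1)'$ and $(1,\kappa)'$. First I would observe that both the objective and $\Theta_0^\kappa$ are invariant under coordinatewise sign changes: given any candidate $a$, reflecting its coordinates to match the signs of $\hat\theta_1,\hat\theta_2$ keeps $a$ inside $\Theta_0^\kappa$ and can only decrease each term $(\hat\theta_g-a_g)^2$. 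Hence a minimizer may be taken in the orthant containing $\hat\theta$, and after relabeling coordinates by magnitude the problem becomes the projection of $(\hat\theta_{\max},\hat\theta_{\min})'$ onto the planar cone $\{a_1\le\kappa a_2,\, a_2\le\kappa a_1\}$ with weights $w_{\max}=1/\hat\tau_{\max}$ and $w_{\min}=1/\hat\tau_{\min}$, where $\hat\tau_{\max}$ and $\hat\tau_{\min}$ are the asymptotic variances $n_g^{-1}\sigma_g^2$ of the larger- and smaller-magnitude estimates.

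Next I would identify the active constraint. Because $\hat\theta_{\max}\ge\hat\theta_{\min}\ge 0$, the inequality $a_2\le\kappa a_1$ is automatically satisfied at $(\hat\theta_{\max},\hat\theta_{\min})'$, so only $a_1\le\kappa a_2$ can bind. Consequently either $\hat\theta\in\Theta_0^\kappa$---equivalently $\hat\theta_{\max}\le\kappa\hat\theta_{\min}$, in which case the distance is zero and the signed statistic is nonpositive---or the closest null point is the weighted projection of $(\hat\theta_{\max},\hat\theta_{\min})'$ onto the single ray spanned by $(\kappa,1)'$. In either case the computation reduces to minimizing $D(t)=w_{\max}(\hat\theta_{\max}-\kappa t)^2+w_{\min}(\hat\theta_{\min}-t)^2$ over $t\in\mathds{R}$.

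The remaining step is elementary calculus. Setting $D'(t)=0$ gives the stationary point $t^\ast=(\kappa w_{\max}\hat\theta_{\max}+w_{\min}\hat\theta_{\min})/(\kappa^2 w_{\max}+w_{\min})$; substituting back and simplifying yields $D(t^\ast)=(\hat\theta_{\max}-\kappa\hat\theta_{\min})^2/(\hat\tau_{\max}+\kappa^2\hat\tau_{\min})$ after replacing $w_{\max}=1/\hat\tau_{\max}$ and $w_{\min}=1/\hat\tau_{\min}$. Taking the signed square root---with the sign determined by $\hat\theta_{\max}-\kappa\hat\theta_{\min}$, which is precisely the quantity distinguishing the alternative from the null---produces the claimed expression $T^\kappa=(\hat\theta_{\max}-\kappa\hat\theta_{\min})/\sqrt{\hat\tau_{\max}+\kappa^2\hat\tau_{\min}}$, and I would note that this signed form is monotone in the weighted distance, so rejecting for large $T^\kappa$ is equivalent to the likelihood ratio test.

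I expect the geometric reduction, not the algebra, to be the main obstacle: one must argue rigorously that the global minimizer over the four-cone null region reduces to projecting onto a single boundary ray, and handle the boundary and tie cases carefully---when $\hat\theta_{\max}=\hat\theta_{\min}$ or when a coordinate equals zero the max/min labeling (and hence the assignment of $\hat\tau_{\max},\hat\tau_{\min}$) is ambiguous, and one should check the formula remains well defined there. I would also make explicit the convention reconciling the signed statistic stated in the lemma with the nonnegative distance that formally defines $T^\kappa$, since the two agree as test statistics but differ as functions on the null region.
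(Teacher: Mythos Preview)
Your proposal is correct and follows essentially the same route as the paper: reduce by symmetry to the ordered absolute estimates, observe that the nearest null point lies on the ray spanned by $(\kappa,1)'$, and compute the weighted distance. The only differences are stylistic---you are more explicit about the sign/ordering reduction and obtain the projection via one-variable calculus, whereas the paper applies a matrix projection formula directly---but the argument and result are the same.
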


\begin{proof}[\textbf{Proof of Lemma 1}]
We define $\hat{\theta}_{\text{ord}}$ and $\hat{\Sigma}_{\text{ord}}$ as
\\
\begin{align*}
\hat{\theta}_{\text{ord}} = 
\begin{pmatrix}
\hat{\theta}_{\max} \\ \hat{\theta}_{\min}
\end{pmatrix}
\quad \quad
\hat{\Sigma}_{\text{ord}} = 
\begin{pmatrix}
\hat{\tau}_{\max} & 0
\\
0 & \hat{\tau}_{\min}
\end{pmatrix}.
\end{align*}
\\
Now, we define $\theta^*_{\text{ord}}$ as the projection of $\hat{\theta}_{\text{ord}}$ onto the null region.
If $\hat{\theta}$ is in the null region $\Theta_0^\kappa$, the projection is equal to $\hat{\theta}_{\text{ord}}$.
Otherwise, $\theta^*_{\text{ord}}$ is the projection of $\hat{\theta}_{\text{ord}}$ onto the space spanned by $(\kappa, 1)'$, with distance inversely weighted by $\Sigma_{\text{ord}}$.
We proceed by finding an expression for $\theta^*_{\text{ord}}$.
\begin{align*}
&\theta^*_{\text{ord}}
 =
\\
&\begin{pmatrix}
\kappa \\ 1
\end{pmatrix}
\left[
\begin{pmatrix}
\kappa & 1
\end{pmatrix}
\begin{pmatrix}
\hat{\tau}_{\max} & 0
\\
0 & \hat{\tau}_{\min}
\end{pmatrix}
\begin{pmatrix}
\kappa \\ 1
\end{pmatrix}
\right]^{-1}
\begin{pmatrix}
\kappa & 1
\end{pmatrix}
\begin{pmatrix}
\hat{\tau}_{\max} & 0
\\
0 & \hat{\tau}_{\min}
\end{pmatrix}
\begin{pmatrix}
\hat{\theta}_{\max} \\ \hat{\theta}_{\min}
\end{pmatrix} =
\\
&
\frac{\kappa\hat{\tau}_{\max}\hat{\theta}_{\max} + \hat{\tau}_{\min}\hat{\theta}_{\min}}{\kappa^2\hat{\tau}_{\max} + \hat{\tau}_{\min}}
\begin{pmatrix}
\kappa \\ 1
\end{pmatrix}.
\end{align*}

Now, the weighted distance between $\hat{\theta}_{\text{ord}}$ and $\theta^*_{\text{ord}}$ is
\\
\begin{align*}
&(\hat{\theta}_{\text{ord}} - \theta^*_{\text{ord}})'
\Sigma_{\text{ord}}^{-1}
(\hat{\theta}_{\text{ord}} - \theta^*_{\text{ord}}) =
\\
& \hat{\tau}^{-1}_{\max}
\left(\hat{\theta}_{\max} - \frac{\kappa^2\hat{\tau}_{\max}\hat{\theta}_{\max} + \kappa\hat{\tau}_{\min}\hat{\theta}_{\min}}{\kappa^2\hat{\tau}_{\max} + \hat{\tau}_{\min}} \right)^2
 +
\hat{\tau}^{-1}_{\min}
 \left( \hat{\theta}_{\min} - \frac{\kappa\hat{\tau}_{\max}\hat{\theta}_{\max} + \hat{\tau}_{\min}\hat{\theta}_{\min}}{\kappa^2\hat{\tau}_{\max} + \hat{\tau}_{\min}} \right)^2 =
\\
&
\hat{\tau}_{\text{max}}^{-1}\left(\frac{\hat{\tau}_{\min}^{-1}\hat{\theta}_{\text{max}} - \kappa\hat{\tau}_{\min}^{-1}\hat{\theta}_{\min}}{\kappa^2\hat{\tau}_{\text{max}}^{-1} + \hat{\tau}_{\min}^{-1}}\right)^2 
+
\hat{\tau}_{\min}^{-1}
\left(\frac{\kappa\hat{\tau}_{\text{max}}^{-1}\hat{\theta}_{\text{max}} + \kappa^2\hat{\tau}_{\text{max}}^{-1}\hat{\theta}_{\min}}{\kappa^2\hat{\tau}_{\text{max}}^{-1} + \hat{\tau}_{\min}^{-1}}\right)^2 =
\\
&
\frac{\kappa^2\hat{\tau}^{-2}_{\text{max}}\hat{\tau}_{\min}^{-1} + \hat{\tau}_{\text{max}}^{-1}\hat{\tau}^{-2}_{\min}}{(\kappa^2\hat{\tau}_{\text{max}}^{-1} + \hat{\tau}_{\min}^{-1})^2}
(\hat{\theta}_{\max} - \kappa\hat{\theta}_{\text{min}})^2 =
\\
&
\frac{(\hat{\theta}_{\max} - \kappa\hat{\theta}_{\text{min}})^2}{\hat{\tau}_{\max} + \kappa^2\hat{\tau}_{\min}}.
\end{align*}
\\
Hence, the likelihood ratio test rejects the null for large values of
\\
\[
\frac{(\hat{\theta}_{\max} - \kappa\hat{\theta}_{\text{min}})^2}{\hat{\tau}_{\max} + \kappa^2\hat{\tau}_{\min}}
\mathds{1}
\left(
\hat{\theta}_{\max} - \kappa \hat{\theta}_{\min} > 0 
\right),
\]
or equivalently, for large positive values of ${T}^\kappa$, where
\[
T^\kappa = 
\frac{\hat{\theta}_{\max} - \kappa\hat{\theta}_{\text{min}}}{\sqrt{\hat{\tau}_{\max} + \kappa^2\hat{\tau}_{\min}}}.
\]
\end{proof}

\begin{proposition}
\textit{
Let $n_1 + n_2 = N$, and assume $n_1/N \to \lambda > 0$ as $n_1, n_2 \to \infty$.
Then,
\begin{enumerate}[i]
\item In the interior of the null region, i.e., when $\theta_{\max} - \kappa \theta_{\min} < 0$, $T^\kappa$ converges in distribution to $-\infty$.
\item At all  nonzero boundary points of the null region, i.e., when $\theta_{\max} = \kappa \theta_{\min} > 0$, $T^\kappa$ converges in distribution to a standard normal random variable.
\end{enumerate}
}
\end{proposition}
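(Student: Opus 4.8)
The plan is to reduce the randomly-defined statistic $T^\kappa$ (via Lemma 1) to a linear combination of $\hat\theta_1$ and $\hat\theta_2$ with \emph{deterministic} coefficients on an event of probability tending to one, and then to apply the assumed marginal central limit theorems together with Slutsky's theorem. The crucial structural observation is that both $\theta_1$ and $\theta_2$ are nonzero in each case: in the interior, $\theta_{\max} - \kappa\theta_{\min} < 0$ forces $\theta_{\min} > 0$ because $\kappa > 1$ and $\theta_{\max} \ge \theta_{\min} \ge 0$; and on a nonzero boundary point $\theta_{\max} = \kappa\theta_{\min} > 0$. Nonzero, stable signs are what let the absolute values behave like signed versions of $\hat\theta_g$, and strict ordering (on the boundary) is what pins down the random variance weights $\hat\tau_{\max},\hat\tau_{\min}$.

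I would handle the boundary case (ii) first. Write, without loss of generality, $\theta_{\max} = |\theta_1|$ and $\theta_{\min} = |\theta_2|$; since $\kappa > 1$ the ordering is strict, $|\theta_1| > |\theta_2| > 0$. Let $s_g = \mbox{sign}(\theta_g)$. Because $\hat\theta_g \to_p \theta_g \neq 0$, the events $\{\mbox{sign}(\hat\theta_g) = s_g\}$ and $\{|\hat\theta_1| > |\hat\theta_2|\}$ each have probability tending to one; on their intersection $A_N$ we have $|\hat\theta_g| = s_g\hat\theta_g$, $\hat\theta_{\max} = s_1\hat\theta_1$, $\hat\theta_{\min} = s_2\hat\theta_2$, $\hat\tau_{\max} = n_1^{-1}\sigma_1^2$, and $\hat\tau_{\min} = n_2^{-1}\sigma_2^2$, so that $T^\kappa$ coincides with the oracle statistic
\[
\tilde T^\kappa = \frac{s_1\hat\theta_1 - \kappa s_2\hat\theta_2}{\sqrt{n_1^{-1}\sigma_1^2 + \kappa^2 n_2^{-1}\sigma_2^2}}.
\]
Since $\mathbb{P}(A_N) \to 1$, it suffices to find the limiting law of $\tilde T^\kappa$. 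Using $s_1\theta_1 - \kappa s_2\theta_2 = \theta_{\max} - \kappa\theta_{\min} = 0$, I would rewrite the numerator as $s_1(\hat\theta_1 - \theta_1) - \kappa s_2(\hat\theta_2 - \theta_2)$, multiply numerator and denominator by $\sqrt{N}$, and use $\sqrt{N}(\hat\theta_g - \theta_g) = \sqrt{N/n_g}\,\sqrt{n_g}(\hat\theta_g - \theta_g)$ with $N/n_1 \to \lambda^{-1}$ and $N/n_2 \to (1-\lambda)^{-1}$. Cross-population independence of $\hat\theta_1,\hat\theta_2$ and the marginal CLTs give the scaled numerator converging to $N\big(0,\lambda^{-1}\sigma_1^2 + \kappa^2(1-\lambda)^{-1}\sigma_2^2\big)$, while the scaled denominator converges to the square root of the identical constant; Slutsky's theorem yields $\tilde T^\kappa \to_d N(0,1)$, and hence $T^\kappa \to_d N(0,1)$.

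For the interior case (i) the argument is simpler and needs no stabilization of the sign or order indicators. By the continuous mapping theorem, $\hat\theta_{\max} \to_p \theta_{\max}$ and $\hat\theta_{\min} \to_p \theta_{\min}$, so the numerator of $T^\kappa$ converges in probability to $\theta_{\max} - \kappa\theta_{\min} < 0$. The denominator is a positive sequence bounded above by $(1+\kappa^2)^{1/2}\sqrt{\max_g n_g^{-1}\sigma_g^2} \to 0$, so $T^\kappa$ is a fixed negative constant, up to $o_p(1)$, divided by a sequence tending to $0^+$. Thus $T^\kappa \to_p -\infty$, i.e.\ $\mathbb{P}(T^\kappa \le M) \to 1$ for every fixed $M$, which is exactly the asserted convergence in distribution to $-\infty$.

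I expect the only genuine obstacle to be the bookkeeping in case (ii): verifying that the random absolute values and the random variance weights $\hat\tau_{\max},\hat\tau_{\min}$ may be replaced by their deterministic counterparts on the high-probability event $A_N$. Once $T^\kappa$ is identified with $\tilde T^\kappa$ on $A_N$, the rest is a routine CLT-plus-Slutsky computation; the one point deserving care is the exact matching of the limiting variance of the (rescaled) numerator to the (rescaled) denominator, which relies on the independence of $\hat\theta_1$ and $\hat\theta_2$ across sub-populations.
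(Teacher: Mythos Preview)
Your proposal is correct and follows essentially the same route as the paper's own proof. The only difference is presentational: for part (ii) the paper invokes the delta method directly (writing the Jacobian $M$ of $(\theta_1,\theta_2)\mapsto \max_g|\theta_g|-\kappa\min_g|\theta_g|$ and applying Slutsky with the scaling $\sqrt{n_1n_2/N}$), whereas you unpack the same linearization by hand via the high-probability event $A_N$ on which signs and ordering stabilize, and scale by $\sqrt{N}$ instead; part (i) is argued identically.
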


\begin{proof}[\textbf{Proof of Proposition 1}]
First we prove (\textit{i}).
Suppose $\theta_{\max} - \kappa\theta_{\min}  = b< 0$.  
Then consistency of $\hat{\theta}$ and the continuous mapping theorem imply that $\hat{\theta}_{\max} - \kappa\hat{\theta}_{\min} \to_p b$.  
Now, because $n_1^{-1}\sigma_1^2$ and $n_2^{-1}\sigma_2^2$ both tend to zero, $\frac{1}{\sqrt{\hat{\tau}_{\max} + \kappa^2\hat{\tau}_{\min}}} \to_p \infty$.
Therefore,  $T^\kappa \to_p -\infty.$
\\ \\
Now we prove (\textit{ii}).
Suppose $\theta_{\max} = \kappa\theta_{\min} > 0$.
By applying the delta method,
\begin{align*}
&\sqrt{\frac{n_1n_2}{N}}\left(\hat{\theta}_{\max} - \kappa\hat{\theta}_{\min}\right)
= 
\\
&\sqrt{\frac{n_1n_2}{N}}
\left[
\left(
\max\left\{|\hat{\theta}_1|, |\hat{\theta}_2|\right\} -
\kappa\min\left\{|\hat{\theta}_1|, |\hat{\theta}_2|\right\}
\right) 
-
\left(
\max\left\{|\theta_1|, |\theta_2|\right\} -
\kappa\min\left\{|\theta_1|, |\theta_2|\right\}
\right) 
\right]
=
\\
&\sqrt{\frac{n_1n_2}{N}}
M
\begin{pmatrix}
\hat{\theta}_1 - \theta_1
\\
\hat{\theta}_2 - \theta_2
\end{pmatrix} + o_p(1)
,
\end{align*}
where the matrix $M$ is
\[
M = 
\begin{pmatrix}
(-\kappa)^{\mathds{1}(|\theta_1| = \theta_{\min})}\,\text{sign}(\theta_1) & 0
\\
0 & (-\kappa)^{\mathds{1}(|\theta_2| = \theta_{\min})}\,\text{sign}(\theta_2)
\end{pmatrix}.
\]
Now, Slutsky's theorem implies that
\begin{align*}
\sqrt{\frac{n_1n_2}{N}}
M
\begin{pmatrix}
\hat{\theta}_1 - \theta_1
\\
\hat{\theta}_2 - \theta_2
\end{pmatrix} + o_p(1) \to_d
N\left(0, (1-\lambda)\left(\kappa^2\right)^{\mathds{1}(|\theta_1| = \theta_{\min})}{\sigma^2_1} + \lambda\left(\kappa^2\right)^{\mathds{1}(|\theta_2| = \theta_{\min})}{\sigma^2_2}\right).
\end{align*}
The test statistic $T^\kappa$ can be written as
\begin{align*}
T^\kappa &= 
\frac{\hat{\theta}_{\max} - \kappa\hat{\theta}_{\text{min}}}{\sqrt{\hat{\tau}_{\max} + \kappa^2\hat{\tau}_{\min}}}
\\
&= \frac{\sqrt{\frac{n_1n_2}{N}}\left(\hat{\theta}_{\max} - \kappa\hat{\theta}_{\min}\right)}{\sqrt{\frac{n_1n_2}{N}\hat{\tau}^2_{\max} + \frac{n_1n_2}{N}\kappa^2\hat{\tau}^2_{\min}}},
\end{align*}
with $\hat{\tau}_{\max}$ and $\hat{\tau}_{\min}$ as defined in Proposition 1.
By the continuous mapping theorem, the denominator converges in probability to 
\[
\sqrt{
(1-\lambda)\left(\kappa^2\right)^{\mathds{1}(|\theta_1| = \theta_{\min})}{\sigma^2_1} + \lambda\left(\kappa^2\right)^{\mathds{1}(|\theta_2| = \theta_{\min})}{\sigma^2_2}
}
.
\]
Thus, Slutsky's theorem implies that
\[
T^\kappa \to_d N(0, 1).
\]

\end{proof}

In Proposition 2, we characterize the local asymptotic behavior of the relative difference likelihood ratio test.

\begin{proposition}
\textit{
Let $n_1 + n_2 = N$, and assume $n_1/N \to \lambda > 0$ as $n_1, n_2 \to \infty$.
Suppose $\theta_1 = n^{-1/2}_1c_1$, and $\theta_2 = {n_2^{-1/2}c_2}$.
Further, assume that $\hat{\theta}$ is locally regular in the sense that $\sqrt{n_g}\hat{\theta}_g \to_d N(c_g, {\sigma^2_g})$ for $g \in \{1,2\}$. 
Then as $n_1, n_2 \to \infty$, for all $t > 0$,  $\mathbb{P}\left(T^\kappa > t\right)$ converges to
\begin{align*}
\mathbb{P}(W_{11} > t, W_{12} > t) + \mathbb{P}(W_{11} < -t, W_{12} < -t) + \mathbb{P}(W_{21} > t, W_{22} > t) + \mathbb{P}(W_{21} < -t, W_{22} < -t),
\end{align*}
where $(W_{11},W_{12})$ follows a bivariate normal distribution with mean $(\tilde{c}_{11}, \tilde{c}_{12})$, unit variance and correlation $\nu_1$, and $(W_{21},W_{22})$ follows a bivariate normal distribution with mean $(\tilde{c}_{21}, \tilde{c}_{22})$, unit variance and correlation $\nu_2$, with $\tilde{c}_{11}, \tilde{c}_{12}, \tilde{c}_{21}, \tilde{c}_{22}, \nu_1, \nu_2$ are defined as
\begin{align*}
\tilde{c}_{11} = \frac{(1-\lambda)c_1 - \kappa\lambda c_2}{\sqrt{(1-\lambda)\sigma_1^{2} + \kappa^2\lambda\sigma_2^2}}
\\
\tilde{c}_{12} = \frac{(1-\lambda)c_1 + \kappa\lambda c_2}{\sqrt{(1-\lambda)\sigma_1^{2} + \kappa^2\lambda\sigma_2^2}}
\\
\tilde{c}_{21} = \frac{\lambda c_2 -\kappa(1-\lambda)c_1 }{\sqrt{\kappa^2(1-\lambda)\sigma_1^{2} + \lambda\sigma_2^2}}
\\
\tilde{c}_{22} = \frac{\lambda c_2 + \kappa(1-\lambda)c_1}{\sqrt{\kappa^2(1-\lambda)\sigma_1^{2} + \lambda\sigma_2^2}}
\\
\nu_1 = \frac{(1-\lambda)\sigma_1^2 - \kappa^2\lambda\sigma_2^2}{(1-\lambda)\sigma_1^{2} + \kappa^2\lambda\sigma_2^2 }
\\
\nu_2 = \frac{\lambda\sigma_2^2 - \kappa^2(1-\lambda)\sigma_1^2}{\kappa^2(1-\lambda)\sigma_1^{2} + \lambda\sigma_2^2 }
.
\end{align*}
}
\end{proposition}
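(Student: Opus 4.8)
The plan is to reduce the assertion to a continuous-mapping argument on the suitably scaled estimator pair, so that the nonstandard behaviour enters only through the absolute values and the ordering of $|\hat\theta_1|,|\hat\theta_2|$. Set $A_n = \sqrt{n_1 n_2/N}\,\hat\theta_1$ and $B_n = \sqrt{n_1 n_2/N}\,\hat\theta_2$. Under the stated local alternative and $n_1/N \to \lambda$, the local-regularity assumption gives that $A_n$ and $B_n$ are asymptotically independent with $A_n \to_d A \sim N\!\big(\sqrt{1-\lambda}\,c_1,(1-\lambda)\sigma_1^2\big)$ and $B_n \to_d B \sim N\!\big(\sqrt{\lambda}\,c_2,\lambda\sigma_2^2\big)$; this matches the variance bookkeeping already used in the proof of Proposition 1. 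The feature that distinguishes this case is that under $O(n^{-1/2})$ signal neither the signs of the $\hat\theta_g$ nor the order-statistic map $\hat\theta \mapsto (\hat\theta_{\max},\hat\theta_{\min})$ stabilizes, so the data-dependent weights $\hat\tau_{\max},\hat\tau_{\min}$ and the max/min remain genuinely random in the limit, which is precisely what produces the four-term mixture.

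First I would record an exact representation of the upper tail of $T^\kappa$. Because $\hat\tau_{\max}$ and $\hat\tau_{\min}$ are pinned down by which group attains the larger absolute estimate, on $\{|\hat\theta_1| > |\hat\theta_2|\}$ one has $T^\kappa = (|\hat\theta_1| - \kappa|\hat\theta_2|)/\sqrt{n_1^{-1}\sigma_1^2 + \kappa^2 n_2^{-1}\sigma_2^2}$ and, symmetrically, the roles of $1$ and $2$ swap on the complementary event. Writing $R_1$ and $R_2$ for these two normalized contrasts, I would verify that for every $t>0$, $\{T^\kappa > t\} = \{\max(R_1,R_2) > t\}$: the key point is that $R_j > t > 0$ already forces the ordering under which $R_j$ equals $T^\kappa$, so the branch selected by the maximum is the correct one. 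This converts $T^\kappa$, on its upper tail, into a fixed continuous function of $(A_n,B_n)$.

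Scaling numerator and denominator of each $R_j$ by $\sqrt{n_1 n_2/N}$, the weights converge to the nonrandom constants $d_1 = \sqrt{(1-\lambda)\sigma_1^2 + \kappa^2\lambda\sigma_2^2}$ and $d_2 = \sqrt{\kappa^2(1-\lambda)\sigma_1^2 + \lambda\sigma_2^2}$, so Slutsky's theorem and the continuous mapping theorem give $\max(R_1,R_2) \to_d h(A,B) := \max\{(|A|-\kappa|B|)/d_1,\ (|B|-\kappa|A|)/d_2\}$, hence $\mathbb{P}(T^\kappa > t) \to \mathbb{P}(h(A,B) > t)$. For $t>0$ the two branches of $h$ are disjoint. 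On the first branch I would set $W_{11} = (A-\kappa B)/d_1$ and $W_{12} = (A+\kappa B)/d_1$ and split on the sign of $A$, using $\mp\kappa|B| = \min(\mp\kappa B, \pm\kappa B)$ to obtain the identity $\{|A|-\kappa|B| > t d_1\} = \{W_{11} > t,\,W_{12} > t\} \sqcup \{W_{11} < -t,\,W_{12} < -t\}$. By linearity $(W_{11},W_{12})$ is bivariate normal with unit variances, means $(\tilde c_{11},\tilde c_{12})$ read off from the limiting means of $A$ and $B$, and correlation $\nu_1 = (\mathrm{Var}(A) - \kappa^2\mathrm{Var}(B))/(\mathrm{Var}(A) + \kappa^2\mathrm{Var}(B))$, which reduces to the stated $\nu_1$; the symmetric computation on the second branch produces $(W_{21},W_{22})$ with correlation $\nu_2$. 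Adding the four disjoint-event probabilities yields the displayed expression.

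The main obstacle is the combinatorial bookkeeping around the absolute values rather than any analytic estimate. The half-normal-type behaviour of $|\hat\theta_g|$ under a local null is what makes the limit nonstandard, and the delicate step is showing that a single scalar event $\{|A|-\kappa|B| > t d_1\}$ coincides exactly with a disjoint union of two orthant events for the correlated pair $(W_{11},W_{12})$, and doing this consistently for both orderings. Alongside this, I would confirm that the ties $\{|\hat\theta_1| = |\hat\theta_2|\}$ and the level set $\{h(A,B) = t\}$ carry zero probability under the limiting law, so that the continuous mapping theorem applies at every $t > 0$ and the two branches may legitimately be treated as disjoint.
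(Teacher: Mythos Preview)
Your approach is essentially the paper's: decompose $\{T^\kappa > t\}$ into four disjoint orthant events for the pairs $(\hat\theta_1 - \kappa\hat\theta_2,\ \hat\theta_1 + \kappa\hat\theta_2)$ and $(\hat\theta_2 - \kappa\hat\theta_1,\ \hat\theta_2 + \kappa\hat\theta_1)$, then pass to the bivariate normal limit; you package this in a slightly more careful continuous-mapping argument (handling the ordering and the measure-zero ties explicitly), but the content is identical.

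One small discrepancy to be aware of: your limiting means for $(A,B)$ are correctly $\sqrt{1-\lambda}\,c_1$ and $\sqrt{\lambda}\,c_2$, so when you ``read off'' the means of $(W_{11},W_{12})$ you will obtain $(\sqrt{1-\lambda}\,c_1 \mp \kappa\sqrt{\lambda}\,c_2)/d_1$, which does not literally match the stated $\tilde c_{11},\tilde c_{12}$ that carry $(1-\lambda)c_1$ and $\lambda c_2$ in the numerator. The paper's own proof asserts the mean vector $((1-\lambda)c_1,\ \lambda c_2)$ for $\sqrt{n_1n_2/N}\,(\hat\theta_1,\hat\theta_2)$ without derivation, and that step appears to be a typo; it affects neither the variances and correlations $\nu_1,\nu_2$ nor the $c_1=c_2=0$ case used for the p-value, but you should not expect your computed means to coincide with the displayed $\tilde c_{ij}$ as written.
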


\begin{remark}
Setting $c_1 = c_2 = 0$, we obtain the limiting distribution of $T^{\kappa}$ when $\theta_1 = \theta_2 = 0$, which is critical for calculating the p-value $\rho^{\kappa}(t)$.
\end{remark}

\begin{proof}[\textbf{Proof of Proposition 2}]
We first re-write the tail probability as
\begin{align*}
&\mathbb{P}\left(\frac{\hat{\theta}_{\max} - \kappa\hat{\theta}_{\min}}{\hat{\tau}_{\max} + \kappa^2\hat{\tau}_{\min}} > t\right) =
\\ \\
&\mathbb{P}\left(\frac{\hat{\theta}_1 - \kappa\hat{\theta}_2}{\sqrt{n_1^{-1}\sigma_1^2 + \kappa^2n_2^{-1}\sigma_2^2}}  > t, \frac{\hat{\theta}_1 + \kappa\hat{\theta}_2}{\sqrt{n_1^{-1}\sigma_1^2 + \kappa^2n_2^{-1}\sigma_2^2}}  > t\right) + 
\\
&\mathbb{P}\left(\frac{\hat{\theta}_1 - \kappa\hat{\theta}_2}{\sqrt{n_1^{-1}\sigma_1^2 + \kappa^2n_2^{-1}\sigma_2^2}}  < -t, \frac{\hat{\theta}_1 + \kappa\hat{\theta}_2}{\sqrt{n_1^{-1}\sigma_1^2 + \kappa^2n_2^{-1}\sigma_2^2}}  < -t\right) +
\\ 
&\mathbb{P}\left(\frac{\hat{\theta}_2 - \kappa\hat{\theta}_1}{\sqrt{\kappa^2n_1^{-1}\sigma_1^2 + n_2^{-1}\sigma_2^2}}  > t, \frac{\hat{\theta}_2 + \kappa\hat{\theta}_1}{\sqrt{\kappa^2n_1^{-1}\sigma_1^2 + n_2^{-1}\sigma_2^2}}  > t\right) +
\\
&\mathbb{P}\left(\frac{\hat{\theta}_2 - \kappa \hat{\theta}_1}{\sqrt{\kappa^2n_1^{-1}\sigma_1^2 + n_2^{-1}\sigma_2^2}}  < -t, \frac{\hat{\theta}_2 + \kappa \hat{\theta}_1}{\sqrt{\kappa^2n_1^{-1}\sigma_1^2 + n_2^{-1}\sigma_2^2}}  < -t\right).
\end{align*}
\\
Thus,
\begin{align*}
\sqrt{\frac{n_1n_2}{N}}
\begin{pmatrix}
\hat{\theta} _1 \\ \hat{\theta}_2
\end{pmatrix} \to_d
N\left(
\begin{pmatrix}
(1-\lambda)c_1 \\ \lambda c_2
\end{pmatrix}
,
\begin{pmatrix}
(1-\lambda)\sigma_1^2 & 0
\\
0 & \lambda \sigma_2^2
\end{pmatrix}
\right).
\end{align*}
And,
\begin{align*}
\sqrt{\frac{n_1n_2}{N}}
\begin{pmatrix}
1 & -\kappa
\\
1 & \kappa
\end{pmatrix}
\begin{pmatrix}
\hat{\theta} _1 \\ \hat{\theta}_2
\end{pmatrix} \to_d
N\left(
\begin{pmatrix}
(1-\lambda)c_1 - \kappa\lambda c_2  \\ (1-\lambda)c_1 + \kappa\lambda c_2 
\end{pmatrix},
\begin{pmatrix}
(1-\lambda)\sigma_1^2 + \kappa^2\lambda\sigma_2^2 &
(1-\lambda)\sigma_1^2 - \kappa^2 \lambda \sigma_2^2
\\
(1-\lambda)\sigma_1^2 - \kappa^2 \lambda \sigma_2^2 &
\kappa^2(1-\lambda)\sigma_1^2 + \lambda \sigma_2^2
\end{pmatrix}
\right).
\end{align*}

Now,
\begin{align*}
\begin{pmatrix}
\frac{\hat{\theta}_1 - \kappa \hat{\theta}_2}{\sqrt{n_1^{-1}\sigma_1^{2} + \kappa^2n_2^{-1}\sigma_2^2}}
\\
\frac{\hat{\theta}_1 + \kappa\hat{\theta}_2}{\sqrt{n_1^{-1}\sigma_1^{2} + \kappa^2n_2^{-1}\sigma_2^2}}
\end{pmatrix} 
&=
\begin{pmatrix}
\frac{\sqrt{\frac{n_1n_2}{N}}(\hat{\theta}_1 - \kappa \hat{\theta}_2)}{\sqrt{\kappa^2n_2N^{-1}\sigma_1^{2} + n_1N^{-1}\sigma_2^2}}
\\
\frac{\sqrt{\frac{n_1n_2}{N}}(\kappa\hat{\theta}_1 - \hat{\theta}_2)}{\sqrt{\kappa^2n_2N^{-1}\sigma_1^{2} + n_1^{-1}N\sigma_2^2}}
\end{pmatrix} 
\to_d N\left(
\begin{pmatrix}
\tilde{c}_{11} \\ \tilde{c}_{12}
\end{pmatrix}, 
\begin{pmatrix}
1 & \nu_1 \\ \nu_1 & 1
\end{pmatrix}
\right).
\end{align*}
Similarly,
\begin{align*}
\begin{pmatrix}
\frac{\hat{\theta}_2 - \kappa \hat{\theta}_1}{\sqrt{\kappa^2n_1^{-1}\sigma_1^{2} + n_2^{-1}\sigma_2^2}}
\\
\frac{\hat{\theta}_2 + \kappa\hat{\theta}_1}{\sqrt{\kappa^2n_1^{-1}\sigma_1^{2} + n_2^{-1}\sigma_2^2}}
\end{pmatrix} 
\to_d N\left(
\begin{pmatrix}
\tilde{c}_{21} \\ \tilde{c}_{22}
\end{pmatrix}, 
\begin{pmatrix}
1 & \nu_2 \\ \nu_2 & 1
\end{pmatrix}
\right).
\end{align*}

\end{proof}

\subsection*{Theoretical Results for Simultaneous Test for Qualitative Interactions}

In what follows, we generalize Lemma 2 and Proposition 3 to the case of unbalanced sample sizes.

\begin{lemma}
The likelihood ratio statistic $T^{\text{\normalfont{P/N}}, \kappa}$ can be written as
\[
T^{\text{\normalfont{P/N}},\kappa} =
\min\left\{
\frac{\left(\hat{\theta}_1 - \kappa \hat{\theta}_2\right)^2}{n_1^{-1}\sigma_1^2 + \kappa^2n_2^{-1}\sigma^2_2}
,
\frac{\left(\kappa \hat{\theta}_1 - \hat{\theta}_2\right)^2}{\kappa^2n_1^{-1}\sigma_1^{2} + n_2^{-1}\sigma_2^2}
\right\}
\mathds{1}\left(\hat{\theta} \in \Theta_1^{\textnormal{P/N},\kappa}\right).
\]
\end{lemma}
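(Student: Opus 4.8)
The plan is to follow the same geometric strategy used for the relative difference statistic in Lemma 1, now adapted to the intersected null region. As recalled in the Gail--Simon derivation of Section 2.3 and as stated preceding the lemma, the likelihood ratio statistic $T^{\text{P/N},\kappa}$ equals the squared distance from $\hat{\theta}$ to the null region $\Theta_0^{\text{P/N},\kappa} = \Theta_0^{\text{P/N}} \cap \Theta_0^{\kappa}$, measured in the metric induced by $\Sigma^{-1}$ with $\Sigma = \mathrm{diag}(n_1^{-1}\sigma_1^2, n_2^{-1}\sigma_2^2)$. It therefore suffices to evaluate this weighted projection distance, which reduces the problem to a calculation in $\mathds{R}^2$.

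First I would pin down the geometry of $\Theta_0^{\text{P/N},\kappa}$. Because the positive/negative null region is the union of the closed first and third orthants, intersecting it with $\Theta_0^{\kappa}$ deletes the second- and fourth-orthant wedges of $\Theta_0^{\kappa}$, leaving exactly two closed cones: the wedge in the first orthant bounded by the rays in directions $(\kappa,1)'$ and $(1,\kappa)'$, together with its reflection in the third orthant. The key structural observation is that the topological boundary of this (non-convex) region is precisely the union of the two full lines $L_1 = \mathrm{span}\{(\kappa,1)'\}$ and $L_2 = \mathrm{span}\{(1,\kappa)'\}$: each line contributes one boundary ray to each cone, and the two cones meet only at the origin, so the four boundary rays are exactly $L_1 \cup L_2$.

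With this in hand the computation is routine. If $\hat{\theta} \in \Theta_0^{\text{P/N},\kappa}$ the projection distance is zero, which accounts for the indicator $\mathds{1}(\hat{\theta} \in \Theta_1^{\text{P/N},\kappa})$. If instead $\hat{\theta}$ lies in the alternative region, then since the null region is closed its nearest point lies on the boundary, so the weighted distance to the region equals $\min\{d(\hat{\theta},L_1), d(\hat{\theta},L_2)\}$, where $d(\cdot,L_i)$ denotes the $\Sigma^{-1}$-weighted distance to the line $L_i$. I would then evaluate each $d(\hat{\theta},L_i)$ with the standard formula for the $\Sigma^{-1}$-weighted squared distance from a point $x$ to a line $\{x : a^\top x = 0\}$, namely $(a^\top x)^2/(a^\top \Sigma a)$; taking $a = (1,-\kappa)'$ for $L_1$ and $a = (\kappa,-1)'$ for $L_2$ produces exactly the two fractions in the stated minimum.

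The main obstacle, and the step that deserves care, is justifying that the minimum of the two full-line distances really equals the distance to the null region for every point in the alternative region --- including points in the open second and fourth orthants, where the foot of the perpendicular onto a given line may fall on the ray belonging to the \emph{other} cone. This is where the boundary identity $\partial\Theta_0^{\text{P/N},\kappa} = L_1 \cup L_2$ does the work: since every point of $L_1 \cup L_2$ is a genuine boundary point of the closed region, the foot of each perpendicular is always an admissible nearest candidate, so $\min\{d(\hat{\theta},L_1), d(\hat{\theta},L_2)\} = d(\hat{\theta}, L_1 \cup L_2) = d(\hat{\theta}, \partial\Theta_0^{\text{P/N},\kappa})$ with no need to track which ray the foot lands on. Once this identity is established the two displayed fractions follow immediately, and as a sanity check the statistic collapses to the Gail--Simon statistic as $\kappa \to \infty$, consistent with the remark following the lemma.
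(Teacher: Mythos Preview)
Your proposal is correct and follows essentially the same geometric approach as the paper: both identify $T^{\text{P/N},\kappa}$ as the weighted squared distance from $\hat{\theta}$ to $\Theta_0^{\text{P/N},\kappa}$ and reduce to projecting onto the boundary lines $\mathrm{span}\{(\kappa,1)'\}$ and $\mathrm{span}\{(1,\kappa)'\}$, then invoke the projection calculation from Lemma~1. The one difference is organizational: the paper proceeds by case analysis over the location of $\hat{\theta}$ in $\Theta_1^{\text{P/N},\kappa}$, whereas your observation that $\partial\Theta_0^{\text{P/N},\kappa}$ is \emph{exactly} the union of the two full lines $L_1\cup L_2$ handles all alternative-region points at once and makes explicit why the foot of the perpendicular is always admissible, which the paper leaves implicit.
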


\begin{proof}[\textbf{Proof of Lemma 2}]
If $\hat{\theta}$ belongs to the null region, $\Theta_0^{\text{P/N},\kappa}$, the likelihood ratio statistic is clearly zero.
If $\hat{\theta}_1 > \kappa\hat{\theta}_2 > 0$ or $\hat{\theta}_1 < \kappa \hat{\theta}_2 < 0$, the likelihood ratio statistic is minimum of the distances between $\hat{\theta}$ and its projections onto the span of $(1,\kappa)'$ and the span of $(-\kappa, -1)'$.
Algebra (similar to the proof of Lemma 1) gives the desired result, and similarly if $\hat{\theta}_1 > -\kappa\hat{\theta}_2 > 0$ or $\hat{\theta}_1 < -\kappa \hat{\theta}_2 < 0$.
\end{proof}

\begin{proposition}
\textit{Let $n_1 + n_2 = N$, and assume $n_1/N \to \lambda > 0$. 
Then,
\begin{enumerate}[i]
\item If $\theta$ belongs to the interior of the null region $\Theta_0^{\textnormal{P/N},\kappa}$, $T^{\textnormal{P/N}, \kappa}$ converges in distribution to zero.
\item If $\theta$ is on the boundary of the null region, but $\theta \neq 0$, $\mathbb{P}(T^{\textnormal{P/N}, \kappa} > t) \to \frac{1}{2}\mathbb{P}\left(\chi^2_1 > t\right)$ as $n_1, n_2 \to \infty$.
\end{enumerate}
}
\end{proposition}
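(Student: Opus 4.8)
The plan is to dispatch the two cases separately, reducing the nonzero-boundary case to a single half-normal computation via a localization argument built on the closed form in Lemma 2. For part (\textit{i}), I would argue directly from consistency: since $\hat{\theta} \to_p \theta$ and the interior of $\Theta_0^{\text{P/N},\kappa}$ is open, the event $\{\hat{\theta} \in \Theta_0^{\text{P/N},\kappa}\}$ has probability tending to one; on this event the indicator in Lemma 2 vanishes, so $T^{\text{P/N},\kappa} = 0$. Hence $T^{\text{P/N},\kappa} \to_p 0$, giving convergence in distribution to the constant zero.

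For part (\textit{ii}), by the symmetry of the four boundary rays it suffices to treat one representative point, say $\theta$ with $\theta_1 = \kappa\theta_2 > 0$ (so $\theta_1 = \theta_{\max}$, $\theta_2 = \theta_{\min}$, both strictly positive). First I would show that the minimum in Lemma 2 is attained by the first term with probability tending to one: the numerator of the second term, $\kappa\hat{\theta}_1 - \hat{\theta}_2$, converges in probability to $\kappa\theta_1 - \theta_2 = (\kappa^2-1)\theta_2 > 0$ while its denominator is $O(n^{-1})$, so that term diverges to $+\infty$ in probability. Second, I would show the indicator reduces to a single half-space. Since $\theta$ lies strictly inside the positive orthant and strictly off the other boundary ray $\theta_2 = \kappa\theta_1$ (which would force $\theta_2 = \kappa^2\theta_2$, impossible for $\kappa > 1$), consistency implies that the only constraint that can be violated near $\theta$ is $\theta_1 \leq \kappa\theta_2$; thus $\mathds{1}(\hat{\theta} \in \Theta_1^{\text{P/N},\kappa}) = \mathds{1}(\hat{\theta}_1 - \kappa\hat{\theta}_2 > 0)$ on an event of probability tending to one.

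Setting $Z_n = (\hat{\theta}_1 - \kappa\hat{\theta}_2)/\sqrt{n_1^{-1}\sigma_1^2 + \kappa^2 n_2^{-1}\sigma_2^2}$, independence of the two subsamples and the assumed asymptotic normality give that $\hat{\theta}_1 - \kappa\hat{\theta}_2$ is asymptotically normal with mean $\theta_1 - \kappa\theta_2 = 0$ and the stated variance, so $Z_n \to_d N(0,1)$. Combining the previous steps, $T^{\text{P/N},\kappa} = Z_n^2\,\mathds{1}(Z_n > 0) + o_p(1)$, and the continuous mapping theorem together with Slutsky's theorem yield, for every $t > 0$,
\[
\mathbb{P}\left(T^{\text{P/N},\kappa} > t\right) \to \mathbb{P}\left(Z > \sqrt{t}\right) = \tfrac{1}{2}\mathbb{P}\left(\chi^2_1 > t\right), \qquad Z \sim N(0,1),
\]
as claimed.

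I expect the main obstacle to be the bookkeeping in the localization argument --- carefully verifying that both events (``the first term is the minimizer'' and ``the indicator equals $\mathds{1}(\hat{\theta}_1 > \kappa\hat{\theta}_2)$'') hold with probability tending to one and do not disturb the weak convergence of $Z_n$ --- rather than the asymptotic normality itself, which is immediate from the assumed normality and independence of the subsample estimates. Two minor care points are confirming that $x \mapsto x^2\mathds{1}(x > 0)$ is continuous, so the continuous mapping theorem applies, and that the limiting law of $Z^2\mathds{1}(Z>0)$ has no atom at $t > 0$ (as $Z = \sqrt{t}$ has probability zero), so the tail probability converges at each such $t$.
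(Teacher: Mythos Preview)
Your argument is correct. Part (\textit{i}) matches the paper's: the indicator vanishes with probability tending to one, so $T^{\text{P/N},\kappa} = 0$ on that event and hence in probability.

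For part (\textit{ii}) you take a somewhat different route. The paper does not localize. Instead it rewrites $\mathbb{P}(T^{\text{P/N},\kappa} > t)$ \emph{exactly} as a sum of four terms according to the joint signs of $\hat{\theta}_1 - \kappa\hat{\theta}_2$ and $\kappa\hat{\theta}_1 - \hat{\theta}_2$, observes algebraically that the two cross-sign terms are identically zero (those sign patterns force $\hat{\theta} \in \Theta_0^{\text{P/N},\kappa}$) and that the indicator may be dropped from the remaining two (those patterns already force $\hat{\theta} \in \Theta_1^{\text{P/N},\kappa}$). This yields an exact two-term identity for the tail probability valid at every $\theta$; at a nonzero boundary point such as $\theta_1 = \kappa\theta_2 > 0$, one of the two bivariate-normal tails then vanishes in the limit because $\kappa\hat{\theta}_1 - \hat{\theta}_2$ drifts to $+\infty$. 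Your localization argument --- identifying which branch of the minimum is active and which half-space the indicator reduces to, each with probability tending to one, and then applying the continuous mapping theorem to $Z_n$ --- is more direct for this proposition and avoids the joint-distribution bookkeeping. The paper's exact decomposition, on the other hand, produces a reusable identity (their equation \eqref{eqn:OmniTail}) on which the local-power calculation in Proposition~4 is built, so it does double duty.
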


\begin{proof}[\textbf{Proof of Proposition 3}]
We first prove (\textit{i}).
If $\theta$ belongs to the interior of the null region, consistency of $\hat{\theta}$ and the continuous mapping theorem imply that $\mathds{1}\left(\hat{\theta} \in \Theta_1^{\text{P/N}, \kappa}\right)$ converges in probability to zero.
Then, by Slutsky's theorem, $T^{\text{P/N},\kappa}$ converges in distribution to zero.
\\ \\
To prove \textit{(ii)}, recall that we can write the null and alternative regions as
\begin{align*}
\Theta_0^{\text{P/N}, \kappa} =
&\{0 < \kappa^{-1}\theta_1 < \theta_2 < \kappa \theta_1 \} \cup \{\kappa\theta_1 < \theta_2 < \kappa^{-1} \theta_1 < 0\}
\\
\Theta_1^{\text{P/N}, \kappa} = 
&\left\{\{\theta_1 > \kappa\theta_2\} \cap \{\theta_1 > 0\}\right\} \cup
\left\{\{-\theta_1 > -\kappa\theta_2\} \cap \{\theta_1 < 0\}\right\} \cup
\\
&\left\{\{\theta_2 > \kappa\theta_1\} \cap \{\theta_2 > 0\}\right\} \cup
\left\{\{-\theta_2 > -\kappa\theta_1\} \cap \{\theta_2 < 0\}\right\}.
\end{align*}
Now, we write
\begin{align*}
&\mathbb{P}\left(T^{\text{P/N},\kappa} > t \right) = 
\\ \\
&\mathbb{P}\left(
\frac{\left(\hat{\theta}_1 - \kappa \hat{\theta}_2\right)^2}{n_1^{-1}\sigma_1^{2} + \kappa^2n_2^{-1}\sigma_2^2}\mathds{1}\left(\hat{\theta} \in \Theta_1\right) > t,
\frac{\left(\kappa\hat{\theta}_1 - \hat{\theta}_2\right)^2}{\kappa^2n_1^{-1}\sigma_1^{2} + n_2^{-1}\sigma_2^2}\mathds{1}\left(\hat{\theta} \in \Theta_1\right) > t
\right) =
\\ \\
&\mathbb{P}\left(
\frac{\hat{\theta}_1 - \kappa \hat{\theta}_2}{\sqrt{n_1^{-1}\sigma_1^{2} + \kappa^2n_2^{-1}\sigma_2^2}}\mathds{1}\left(\hat{\theta} \in \Theta_1\right)  > \sqrt{t},
\frac{\kappa\hat{\theta}_1 - \hat{\theta}_2}{\sqrt{\kappa^2n_1^{-1}\sigma_1^{2} + n_2^{-1}\sigma_2^2}}\mathds{1}\left(\hat{\theta} \in \Theta_1\right)  > \sqrt{t} 
\right)  +
\\
&\mathbb{P}\left(
\frac{\hat{\theta}_1 - \kappa \hat{\theta}_2}{\sqrt{n_1^{-1}\sigma_1^{2} + \kappa^2n_2^{-1}\sigma_2^2}}\mathds{1}\left(\hat{\theta} \in \Theta_1\right)  > \sqrt{t}
,
\frac{\kappa\hat{\theta}_1 - \hat{\theta}_2}{\sqrt{\kappa^2n_1^{-1}\sigma_1^{2} + n_2^{-1}\sigma_2^2}}\mathds{1}\left(\hat{\theta} \in \Theta_1\right)  < -\sqrt{t}
\right) +
\\
&\mathbb{P}\left(
\frac{\hat{\theta}_1 - \kappa \hat{\theta}_2}{\sqrt{n_1^{-1}\sigma_1^{2} + \kappa^2n_2^{-1}\sigma_2^2}}\mathds{1}\left(\hat{\theta} \in \Theta_1\right)  < -\sqrt{t}
,
\frac{\kappa\hat{\theta}_1 - \hat{\theta}_2}{\sqrt{\kappa^2n_1^{-1}\sigma_1^{2} + n_2^{-1}\sigma_2^2}}\mathds{1}\left(\hat{\theta} \in \Theta_1\right)  > \sqrt{t} 
\right) +
\\
&\mathbb{P}\left(
\frac{\hat{\theta}_1 - \kappa \hat{\theta}_2}{\sqrt{n_1^{-1}\sigma_1^{2} + \kappa^2n_2^{-1}\sigma_2^2}}\mathds{1}\left(\hat{\theta} \in \Theta_1\right)  < -\sqrt{t} 
,
\frac{\kappa\hat{\theta}_1 - \hat{\theta}_2}{\sqrt{\kappa^2n_1^{-1}\sigma_1^{2} + n_2^{-1}\sigma_2^2}}\mathds{1}\left(\hat{\theta} \in \Theta_1\right)  < -\sqrt{t} 
\right).
\end{align*}
\\
The second and third summands above are exactly equal to 0.
 To see this, note that in the second term $\hat{\theta}_1 > \kappa \hat{\theta}_2$ and $\hat{\theta}_1 < \kappa^{-1}\hat{\theta}_2$ implies that $\hat{\theta} \in \Theta_0^{\text{P/N},\kappa}$, and similarly for the third term.
In the first and last summands, we can ignore the term $\mathds{1}\left(\hat{\theta} \in \Theta_1^{\text{P/N},\kappa}\right)$. 
To see this, note that in the first term $\hat{\theta}_1 > \kappa \hat{\theta}_2$ and $\hat{\theta_1} > \kappa^{-1}\hat{\theta_2}$ imply that $\hat{\theta} \in \Theta_1^{\text{P/N},\kappa}$; a similar argument holds for the fourth term.  Thus,
\begin{align}
\mathbb{P}\left(T^{\text{P/N},\kappa} > t \right) &=
\mathbb{P}\left(
\frac{\hat{\theta}_1 - \kappa \hat{\theta}_2}{\sqrt{n_1^{-1}\sigma_1^{2} + \kappa^2n_2^{-1}\sigma_2^2}}  > \sqrt{t},
\frac{\kappa\hat{\theta}_1 - \hat{\theta}_2}{\sqrt{\kappa^2n_1^{-1}\sigma_1^{2} + n_2^{-1}\sigma_2^2}}  > \sqrt{t} 
\right)  + \nonumber
\\
&\quad\,\,
\mathbb{P}\left(
\frac{\hat{\theta}_1 - \kappa \hat{\theta}_2}{\sqrt{n_1^{-1}\sigma_1^{2} + \kappa^2n_2^{-1}\sigma_2^2}}  < -\sqrt{t}
,
\frac{\kappa\hat{\theta}_1 - \hat{\theta}_2}{\sqrt{\kappa^2n_1^{-1}\sigma_1^{2} + n_2^{-1}\sigma_2^2}}  < -\sqrt{t} 
\right). \label{eqn:OmniTail}
\end{align}
Suppose, for the moment, $\theta_1 = \kappa \theta_2 > 0$.  Then
\begin{align*}
\sqrt{\frac{n_1n_2}{N}}
\begin{pmatrix}
\hat{\theta}_1 - \theta_1
\\
\hat{\theta}_2 - \theta_2
\end{pmatrix} \to_d
N
\left(
0,
\begin{pmatrix}
(1-\lambda)\sigma_1^2  & 0
\\
0 & \lambda \sigma_2^2
\end{pmatrix}
\right).
\end{align*}
Therefore,
\begin{align*}
\sqrt{\frac{n_1n_2}{N}}
\begin{pmatrix}
\hat{\theta}_1 - \kappa\hat{\theta}_2
\\
\kappa\hat{\theta}_1 - \hat{\theta}_2
\end{pmatrix} 
&=
\sqrt{\frac{n_1n_2}{N}}
\begin{pmatrix}
\hat{\theta}_1 - \kappa\hat{\theta}_2
\\
\kappa(\hat{\theta}_1 - \theta_1) - (\hat{\theta}_2 - \theta_2) + (\kappa\theta_1 - \theta_2)
\end{pmatrix} 
\\
&\to_d
N\left(
0, 
\begin{pmatrix}
(1-\lambda)\sigma_1^2 + \kappa^2\lambda\sigma_2^2 &
\kappa(1-\lambda)\sigma_1^2 + \kappa \lambda \sigma_2^2
\\
\kappa(1-\lambda)\sigma_1^2 + \kappa \lambda \sigma_2^2 &
\kappa^2(1-\lambda)\sigma_1^2 + \lambda \sigma_2^2
\end{pmatrix}
\right) +
\begin{pmatrix}
0 \\ \infty
\end{pmatrix}.
\end{align*}
Thus,
\begin{align*}
\mathbb{P}\left(T^{\text{P/N},\kappa} > t\right) &\to 
1 - \Phi\left(\sqrt{t}\right) = \frac{1}{2}\mathbb{P}\left(\chi^2_1 > t \right).
\end{align*}
A similar argument follows if $\theta_1 = \kappa\theta_2 < 0$, $\theta_2 = \kappa\theta_1 > 0$, or $\theta_2 = \kappa \theta_1 < 0$.
\end{proof}

In Proposition 4, we characterize the local asymptotic behavior of the omnibus test for qualitative interactions.

\begin{proposition}
\textit{
Let $n_1 + n_2 = N$, and assume $n_1/N  \to \lambda > 0$ as $n_1, n_2 \to \infty$.  Suppose $\theta_1 = n_1^{-1/2}c_1, \theta_2  n^{-1/2}c_2$ with $c_1, c_2 \geq 0$.  Further, assume $\hat{\theta}$ is locally regular in the sense that $\sqrt{n_g}\hat{\theta}_g \to_d N(c_g, \sigma_g^2)$ for $g \in \{1,2\}$.  Then as $n_1, n_2 \to \infty$, for all $t > 0$,  $\mathbb{P}\left(T^{\textnormal{P/N},\kappa} > t\right)$ converges to
\[
\mathbb{P}\left(V_1 > \sqrt{t}, V_2 > \sqrt{t}\right) + \mathbb{P}\left(V_1 < -\sqrt{t}, V_2 < -\sqrt{t}\right),
\]
where $(V_1,V_2)$ follows a bivariate normal distribution with mean $(\tilde{c}_1, \tilde{c}_2)$, unit variance, and correlation $\nu$, where
\begin{align*}
&\tilde{c}_{1} = \frac{(1-\lambda)c_1 - \kappa \lambda c_2}{\sqrt{(1-\lambda)\sigma_1^{2} + \kappa^2\lambda\sigma_2^2}}
\\
&\tilde{c}_{2} = \frac{\kappa(1-\lambda)c_1 - \lambda c_2}{\sqrt{\kappa^2(1-\lambda)\sigma_1^{2} + \lambda\sigma_2^2}}
\\
&\nu = \frac{\kappa(1-\lambda)\sigma_1^2 + \kappa\lambda\sigma_2^2}{\sqrt{(1-\lambda)\sigma_1^{2} + \kappa^2\lambda\sigma_2^2}\sqrt{\kappa^2(1-\lambda)\sigma_1^{2} + \lambda\sigma_2^2} }.
\end{align*}
}
\end{proposition}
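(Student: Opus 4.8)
The plan is to reuse the exact tail-probability identity already established in the proof of Proposition 3 and then supply only the joint central limit theorem appropriate to the local regime. The key observation is that the derivation of \eqref{eqn:OmniTail} is purely algebraic: the splitting of $\{T^{\text{P/N},\kappa} > t\}$ into four sign-based events, the vanishing of the two mixed-sign events (because each forces $\hat\theta$ into $\Theta_0^{\text{P/N},\kappa}$ and hence kills the indicator), and the dropping of the indicator in the two surviving events all hold for \emph{every} realization of $\hat\theta$, and therefore for every value of $\theta$. I would thus invoke \eqref{eqn:OmniTail} verbatim and reduce the proposition to identifying the joint limiting law of the pair
\[
A_1 = \frac{\hat\theta_1 - \kappa\hat\theta_2}{\sqrt{n_1^{-1}\sigma_1^2 + \kappa^2 n_2^{-1}\sigma_2^2}}, \qquad A_2 = \frac{\kappa\hat\theta_1 - \hat\theta_2}{\sqrt{\kappa^2 n_1^{-1}\sigma_1^2 + n_2^{-1}\sigma_2^2}}
\]
under the local parameterization $\theta_g = n_g^{-1/2}c_g$.

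For the joint limit I would mimic the delta-method and Slutsky arguments from the proofs of Propositions 2 and 3. Local regularity gives $\sqrt{n_g}\,\hat\theta_g \to_d N(c_g,\sigma_g^2)$, and because the two subsamples are independent, the common scaling $\sqrt{n_1 n_2/N}$ applied to $(\hat\theta_1,\hat\theta_2)'$ yields a bivariate normal limit with \emph{diagonal} covariance, the off-diagonal entry being zero precisely because $\hat\theta_1$ and $\hat\theta_2$ are independent. Applying the linear map $\bigl(\begin{smallmatrix} 1 & -\kappa \\ \kappa & -1\end{smallmatrix}\bigr)$ to this limit and dividing each coordinate by its limiting standard deviation, namely $\sqrt{(1-\lambda)\sigma_1^2 + \kappa^2\lambda\sigma_2^2}$ and $\sqrt{\kappa^2(1-\lambda)\sigma_1^2 + \lambda\sigma_2^2}$, produces a bivariate normal $(V_1,V_2)$ with unit variances and with means equal to the stated $\tilde c_1,\tilde c_2$.

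The one genuinely new computation is the correlation. Expanding $\mathrm{Cov}(\hat\theta_1 - \kappa\hat\theta_2,\, \kappa\hat\theta_1 - \hat\theta_2)$ bilinearly and using $\mathrm{Cov}(\hat\theta_1,\hat\theta_2)=0$ collapses it to $\kappa\,\mathrm{Var}(\hat\theta_1) + \kappa\,\mathrm{Var}(\hat\theta_2)$, which under the $\sqrt{n_1 n_2/N}$ scaling tends to $\kappa(1-\lambda)\sigma_1^2 + \kappa\lambda\sigma_2^2$; dividing by the product of the two standard deviations reproduces exactly $\nu$. With the means, unit variances, and $\nu$ in hand, I would finish by continuity of Gaussian orthant probabilities, so that $\mathbb{P}(A_1 > \sqrt t, A_2 > \sqrt t)$ and $\mathbb{P}(A_1 < -\sqrt t, A_2 < -\sqrt t)$ converge to the claimed orthant probabilities of $(V_1,V_2)$.

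I do not expect a serious obstacle here; the work is mostly bookkeeping of the three competing rates $\sqrt{n_1}$, $\sqrt{n_2}$, and $\sqrt{n_1 n_2/N}$ so that the variances normalize to one and the means land on the stated constants. The point worth emphasizing is the conceptual contrast with Proposition 3: there one summand degenerates because a mean shift runs off to $+\infty$, whereas here the local $n^{-1/2}$ scaling keeps both shifts $\tilde c_1,\tilde c_2$ finite, so \emph{both} orthant terms survive in the limit. The sign restriction $c_1,c_2\geq 0$ is not used in the derivation, since \eqref{eqn:OmniTail} is parameter-free; it merely fixes attention on the first-quadrant alternative.
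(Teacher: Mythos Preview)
Your proposal is correct and follows essentially the same route as the paper: invoke the algebraic identity \eqref{eqn:OmniTail} from the proof of Proposition~3, then establish the joint limiting law of the standardized pair $(A_1,A_2)$ under the local parameterization via the linear map $\bigl(\begin{smallmatrix}1 & -\kappa\\ \kappa & -1\end{smallmatrix}\bigr)$ applied to the bivariate limit of $\sqrt{n_1 n_2/N}\,(\hat\theta_1,\hat\theta_2)'$. Your explicit covariance calculation and your remark that \eqref{eqn:OmniTail} is parameter-free (and hence carries over to the local regime without modification) are, if anything, more detailed than what the paper records.
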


\begin{remark}
Setting $c_1 = c_2 = 0$, we obtain the limiting distribution of $T^{\text{P/N},\kappa}$ when $\theta_1 = \theta_2 = 0$, which is critical for calculating the p-value $\rho^{\text{P/N},\kappa}(t)$.
\end{remark}

\begin{proof}[\textbf{Proof of Proposition 4}]
First, 
\begin{align*}
\sqrt{\frac{n_1n_2}{N}}
\begin{pmatrix}
\hat{\theta} _1 \\ \hat{\theta}_2
\end{pmatrix} \to_d
N\left(
\begin{pmatrix}
(1-\lambda)c_1 \\ \lambda c_2
\end{pmatrix}
,
\begin{pmatrix}
(1-\lambda)\sigma_1^2 & 0
\\
0 & \lambda \sigma_2^2
\end{pmatrix}.
\right)
\end{align*}
Also,
\begin{align*}
\sqrt{\frac{n_1n_2}{N}}
\begin{pmatrix}
1 & -\kappa
\\
\kappa & -1
\end{pmatrix}
\begin{pmatrix}
\hat{\theta} _1 \\ \hat{\theta}_2
\end{pmatrix} \to_d
N\left(
\begin{pmatrix}
(1-\lambda)c_1 - \kappa \lambda c_2 \\ \kappa(1-\lambda)c_1 - \lambda c_2
\end{pmatrix},
\begin{pmatrix}
(1-\lambda)\sigma_1^2 + \kappa^2\lambda\sigma_2^2 &
\kappa(1-\lambda)\sigma_1^2 + \kappa \lambda \sigma_2^2
\\
\kappa(1-\lambda)\sigma_1^2 + \kappa \lambda \sigma_2^2 &
\kappa^2(1-\lambda)\sigma_1^2 + \lambda \sigma_2^2
\end{pmatrix}
\right).
\end{align*}

Thus,
\begin{align*}
\begin{pmatrix}
\frac{\hat{\theta}_1 - \kappa \hat{\theta}_2}{\sqrt{n_1^{-1}\sigma_1^{2} + \kappa^2n_2^{-1}\sigma_2^2}}
\\
\frac{\kappa\hat{\theta}_1 - \hat{\theta}_2}{\sqrt{\kappa^2n_1^{-1}\sigma_1^{2} + n_2^{-1}\sigma_2^2}}
\end{pmatrix} 
&=
\begin{pmatrix}
\frac{\sqrt{\frac{n_1n_2}{N}}(\hat{\theta}_1 - \kappa \hat{\theta}_2)}{\sqrt{n_2N^{-1}\sigma_1^{2} + \kappa^2n_1N^{-1}\sigma_2^2}}
\\
\frac{\sqrt{\frac{n_1n_2}{N}}(\kappa\hat{\theta}_1 - \hat{\theta}_2)}{\sqrt{\kappa^2n_2N^{-1}\sigma_1^{2} + n_1^{-1}N\sigma_2^2}}
\end{pmatrix} 
\to_d N\left(
\begin{pmatrix}
\tilde{c}_{1} \\ \tilde{c}_{2}
\end{pmatrix}, 
\begin{pmatrix}
1 & \nu \\ \nu & 1
\end{pmatrix}
\right).
\end{align*}
\\
Application of \eqref{eqn:OmniTail} completes the argument.
\end{proof}

\section*{Calculation of $\kappa^\alpha_{\max}$}

Define $t^\kappa$ as a realization of the likelihood ratio statistic $T^\kappa$ for the relative difference hypothesis, calculated on the observed data.
For a pre-specified $\alpha < 1$, recall that $\kappa_{\max}^\alpha \equiv \sup\{\kappa: \kappa > 1, \rho^\kappa(t^\kappa) < \alpha\}$ is defined as the largest $\kappa > 1$ such that the likelihood ratio test rejects the relative difference null.
Defining $F^\kappa(t) \equiv \lim_{n_1, n_2 \to \infty} \mathbb{P}\left(T^\kappa > t|\theta = 0\right)$, we can evaluate $\kappa^\alpha_{\max}$ as 
\begin{align*} 
\kappa^\alpha_{\max} &= \min\left\{
 \sup\left\{\kappa: \kappa > 1, 1-\Phi\left(t^\kappa\right)< \alpha\right\},  
 \sup\{\kappa: \kappa > 1, 1 - F^\kappa(t^\kappa) < \alpha\} \right\}
 \\
 &\equiv \min\{\pi_1, \pi_2\}.
\end{align*}
Therefore, we are only required to calculate $\pi_1$ and $\pi_2$.

Both $\pi_1$ and $\pi_2$ can be calculated with root-finding algorithms.
To see this, we first observe that $t^\kappa$ is monotone in $\kappa$, recalling that the numerator of $T^\kappa = \frac{\hat{\theta}_{\max} - \kappa \hat{\theta}_{\min}}{\hat{\tau}_{\max} + \kappa^2\hat{\tau}_{\min}}$ decreases in $\kappa$ while the denominator increases.
Monotonicity of $\Phi(\cdot)$ implies that $1 - \Phi(t^\kappa)$ is monotone in $\kappa$.
Now,  applying of Proposition 2, 
\[
1 - F^\kappa(t) = 2\left\{\mathbb{P}\left(W_{11} > t, W_{12} > t\right) +  \mathbb{P}\left(W_{21} > t, W_{22} > t \right) \right\},
\]
where $(W_{11}, W_{12})$ and $(W_{21}, W_{22})$ follow bivariate normal distributions with mean zero, unit variance, and correlations $\nu_1$ and $\nu_2$, as defined in Proposition 2, respectively.
Both $\nu_1$ and $\nu_2$ are monotone decreasing in $\kappa$, so Theorem 8 in \cite{muller2001stochastic} implies that $1 - F^\kappa(t)$ is monotone in $t$.
Thus $1 - F^\kappa(t^\kappa)$ is monotone in $\kappa$.

Monotonicity of $1 - \Phi(t^\kappa)$ and $1 - F^\kappa(t^\kappa)$  implies that if the likelihood ratio test rejects the null hypothesis for some $\kappa > 1$, $\pi_1$ and $\pi_2$ are the unique roots of
\begin{align*}
&f_1 (\kappa)= 1 - \Phi(t^\kappa) - \alpha
\\
&f_2(\kappa) = 1 - F^{\kappa}(t^\kappa) - \alpha,
\end{align*}
respectively. These roots can be easily calculated via, e.g., the bisection method.  In our implementation, we use the \texttt{uniroot} function in R.

\label{lastpage}

\end{document}